\theoremstyle{plain}
\newtheorem{mythe}{Theorem}
\theoremstyle{remark}
\newtheorem{mylem}{Lemma}
\theoremstyle{plain}
\theoremstyle{remark}
\newtheorem{mypro}{Proposition}
\theoremstyle{plain}
\theoremstyle{remark}
\newtheorem{myrem}{Remark}
\theoremstyle{remark}
\theoremstyle{remark}
\theoremstyle{remark}
\theoremstyle{remark}
\newtheorem{remark}{Remark}
\def\UrlSpecials{\do\~{\kern -.15em\lower .7ex\hbox{~}\kern .04em}} \catcode`~=13
\newcommand{\calC}{\mathcal{C}}
\newcommand{\calD}{\mathcal{D}}
\newcommand{\calN}{\mathcal{N}}
\newcommand{\bB}{\mathbf{B}}
\newcommand{\bD}{\mathbf{D}}
\newcommand{\be}{\mathbf{e}}
\newcommand{\bE}{\mathbf{E}}
\newcommand{\bF}{\mathbf{F}}
\newcommand{\bg}{\mathbf{g}}
\newcommand{\bG}{\mathbf{G}}
\newcommand{\bh}{\mathbf{h}}
\newcommand{\bH}{\mathbf{H}}
\newcommand{\bI}{\mathbf{I}}
\newcommand{\bJ}{\mathbf{J}}
\newcommand{\bl}{\mathbf{l}}
\newcommand{\br}{\mathbf{r}}
\newcommand{\bR}{\mathbf{R}}
\newcommand{\bS}{\mathbf{S}}
\newcommand{\bU}{\mathbf{U}}
\newcommand{\bv}{\mathbf{v}}
\newcommand{\bx}{\mathbf{x}}
\newcommand{\by}{\mathbf{y}}
\newcommand{\bZ}{\mathbf{Z}}
\DeclareMathAlphabet{\mathbsf}{OT1}{cmss}{bx}{n}
\DeclareMathAlphabet{\mathssf}{OT1}{cmss}{m}{sl}
\DeclareSymbolFont{bsfletters}{OT1}{cmss}{bx}{n}
\DeclareSymbolFont{ssfletters}{OT1}{cmss}{m}{n}
\DeclareMathSymbol{\bsfGamma}{0}{bsfletters}{'000}
\DeclareMathSymbol{\ssfGamma}{0}{ssfletters}{'000}
\DeclareMathSymbol{\bsfDelta}{0}{bsfletters}{'001}
\DeclareMathSymbol{\ssfDelta}{0}{ssfletters}{'001}
\DeclareMathSymbol{\bsfTheta}{0}{bsfletters}{'002}
\DeclareMathSymbol{\ssfTheta}{0}{ssfletters}{'002}
\DeclareMathSymbol{\bsfLambda}{0}{bsfletters}{'003}
\DeclareMathSymbol{\ssfLambda}{0}{ssfletters}{'003}
\DeclareMathSymbol{\bsfXi}{0}{bsfletters}{'004}
\DeclareMathSymbol{\ssfXi}{0}{ssfletters}{'004}
\DeclareMathSymbol{\bsfPi}{0}{bsfletters}{'005}
\DeclareMathSymbol{\ssfPi}{0}{ssfletters}{'005}
\DeclareMathSymbol{\bsfSigma}{0}{bsfletters}{'006}
\DeclareMathSymbol{\ssfSigma}{0}{ssfletters}{'006}
\DeclareMathSymbol{\bsfUpsilon}{0}{bsfletters}{'007}
\DeclareMathSymbol{\ssfUpsilon}{0}{ssfletters}{'007}
\DeclareMathSymbol{\bsfPhi}{0}{bsfletters}{'010}
\DeclareMathSymbol{\ssfPhi}{0}{ssfletters}{'010}
\DeclareMathSymbol{\bsfPsi}{0}{bsfletters}{'011}
\DeclareMathSymbol{\ssfPsi}{0}{ssfletters}{'011}
\DeclareMathSymbol{\bsfOmega}{0}{bsfletters}{'012}
\DeclareMathSymbol{\ssfOmega}{0}{ssfletters}{'012}
\newcommand{\balpha}{\bm{\alpha}}
\newcommand{\bomega}{\bm{\omega}}
\newcommand{\bGamma}{\bm{\Gamma}}
\newcommand{\bSigma	}{\bm{\Sigma}}
\newcommand{\bPsi}{\bm{\Psi}}
\newcommand{\bOmega}{\bm{\Omega}}
\newcommand{\bPi}{\bm{\Pi}}
\newcommand{\bTheta}{\bm{\Theta}}
\def\norm#1{\left\| #1 \right\|}
\def\norm2#1{\left\| #1 \right\|_2}
\def\norm22#1{\left\| #1 \right\|_2^2}
\DeclareMathOperator{\diag}{diag}
\DeclareMathOperator{\rank}{rank}
\newcommand{\qednew}{\nobreak \ifvmode \relax \else
      \ifdim\lastskip<1.5em \hskip-\lastskip
      \hskip1.5em plus0em minus0.5em \fi \nobreak
      \vrule height0.75em width0.5em depth0.25em\fi}
\newcommand{\trace}{\mathop{\mathrm{Tr}}}%
\title{Intelligent Reflecting Surface Assisted Non-Orthogonal Multiple Access}
\author{Gang~Yang, \emph{Member, IEEE}, Xinyue Xu, Ying-Chang~Liang, \emph{Fellow, IEEE}
\thanks{G.~Yang and X. Xu are with the National Key Laboratory of Science and Technology on Communications, and the Center for Intelligent Networking and Communications (CINC), University of Electronic Science and Technology of China (UESTC), Chengdu 611731, China (e-mails: yanggang@uestc.edu.cn, 201821220447@std.uestc.edu.cn).}
\thanks{Y.-C. Liang is with the Center for Intelligent Networking and Communications (CINC), University of Electronic Science and Technology of China (UESTC), Chengdu 611731, China (e-mail: liangyc@ieee.org). (\emph{Corresponding author: Y.-C. Liang.})}}
\begin{document}
 \maketitle

\vspace{-0.5cm}
\begin{abstract}
Intelligent reflecting surface (IRS) is a new and disruptive technology to achieve spectrum- and energy-efficient as well as cost-efficient wireless networks. This paper considers an IRS-assisted downlink non-orthogonal-multiple-access (NOMA) system. To optimize the rate performance and ensure user fairness, we maximize the minimum decoding signal-to-interference-plus-noise-ratio (i.e., equivalently the rate) of all users, by jointly optimizing the (active) transmit beamforming at the base station (BS) and the phase shifts (i.e., passive beamforming) at the IRS. A combined-channel-strength based user ordering scheme is first proposed to decouple the user-ordering design and the joint beamforming design. Efficient algorithms are further proposed to solve the formulated non-convex problem for the cases of a single-antenna BS and a multi-antenna BS, respectively, by leveraging the block coordinated decent and semidefinite relaxation (SDR) techniques. For the single-antenna BS case, the optimal solution for the power allocation at the BS and the asymptotically optimal solution for the phase shifts at the IRS are obtained in closed forms. For the multi-antenna BS case, it is shown that the rank of the SDR solution to the transmit beamforming design is upper bounded by two. Also, the convergence proof and the complexity analysis are given for the proposed algorithms. Simulation results show that the IRS-assisted downlink NOMA system can enhance the rate performance significantly, compared to traditional NOMA without IRS and traditional orthogonal multiple access with/without IRS. In addition, numerical results demonstrate that the rate degradation due to the IRS's finite phase resolution is slight, and good rate fairness among users can be always guaranteed.
\vspace{-0.5cm}
\end{abstract}

\begin{IEEEkeywords}
Non-orthogonal multiple access, intelligent reflecting surface, rate optimization, user fairness, iterative algorithm. 
\end{IEEEkeywords}

\section{Introduction}

\subsection{Motivation}

Non-orthogonal multiple access (NOMA) which can serve multiple users in the same resource (e.g., time, frequency, code) block, has been recognized as a promising technology for future wireless communication systems, due to its appealing advantages such as enhanced spectrum efficiency (SE), massive wireless connectivity and low latency \cite{NOMAIN5G}. Specifically, power-domain NOMA exploits the difference in the channel gain among multiple users for multiplexing and relies on successive-interference-cancellation (SIC) for decoding multiple data flows \cite{higuchi2015non}. However, it should be noticed that NOMA achieves significant SE gain than traditional orthogonal multiple access (OMA) only when the channel strengths of multiple users are quite different \cite{DingNOMA}. Moreover, NOMA does not always outperform OMA, for instance, when the channel vectors of users are mutually orthogonal in a downlink multi-input-single-output (MISO) system, the traditional OMA (i.e., spatial division multiple access) is more preferable and no gain can be obtained through traditional NOMA.

On the other hand, intelligent reflecting surface (IRS), also termed as reconfigurable intelligent surfaces (RIS) or large intelligent surface (LIS), is emerged as a new and disruptive technology to achieve spectrum- and energy-efficient as well as cost-efficient wireless networks, thus has drawn fast-growing interests from both academia \cite{LiaskosNieCM2018} and industry~\cite{docomoIRS18}. IRS consists of a large number of low-cost reflecting elements and each element can passively reflect a phase-shifted version of the incident electromagnetic field \cite{CuiIRSLight2014}. The reflected signal propagation can be smartly configured by digitally adjusting the phase shifts of all reflecting elements to achieve certain communication objectives such as received-signal power boosting \cite{LiaskosNieCM2018}, interference mitigation \cite{LiangLISA2019}, and secure transmission \cite{chen2019intelligent}. Hence, IRS can be explored to not only provide additional channel pathes to construct stronger combined channels with significant strength difference, but also re-align the users' (combined) channels to obtain NOMA gain in specific scenarios, in an artificial manner.



IRS provides a new approach to enhance the NOMA performance by reconfiguring the wireless environment, which motivates us to integrate IRS with downlink NOMA system. This is an extension of previous conference-version paper \cite{YangLiangNOMAIRS19}, which extends from a single-input-single-output (SISO) IRS-assisted NOMA system to a MISO IRS-assisted NOMA system. In particular, we aim to maximize the minimum rate of NOMA users by jointly optimizing the transmit beamforming at the BS and the phase shifts at the IRS. This problem is challenging, as the user ordering is determined by the phase shifts at the IRS, due to the fact that the phase shifts affect the combined channel strengths of all users.

There are few parallel works on IRS-assisted NOMA \cite{DingPoorCL19,FuShi19,MuNaofalNOMAIRS19}. In particular, multiple IRSs are employed to maximize the number of served users by effectively reconstructing the users' channels in \cite{DingPoorCL19}. By jointly optimizing the base station's (BS's) transmit beamforming and the IRS's phase shifts, the total power at the BS is minimized in \cite{FuShi19}, and the sum rate was maximized in \cite{MuNaofalNOMAIRS19}. To our best knowledge, there is no existing work focusing on max-min rate optimization for such an IRS-assisted NOMA system with user fairness guarantee.


\subsection{Related works}
\subsubsection{Literature on NOMA}%

NOMA was comprehensively studied in single-carrier communication systems.  First, for single-antenna NOMA systems, the ergodic sum rate and the outage probability performance were analyzed in~\cite{NOMADingSignalProcessing}, while optimal transmission power allocation with user fairness guarantee was investigated in~\cite{FairnessSignal}. Then, for downlink MISO NOMA systems, the beamforming design has been studied in lots of prior works. For example, the authors in~\cite{FzhuBeamNOMA } formulated a weighted sum rate maximization problem with decoding order constraints and quality-of-service constraints for each user. The authors in~\cite{BeamMisoNOMA} formulated an energy efficiency maximization problem with user rate requirements and transmit power constraints. And a novel mathematical programming based approach (named as minorization-maximization algorithm) was provided in~\cite{AMimaOptisum} to solve a sum rate maximization problem. For a multi-input-multi-output (MIMO) NOMA system, a NOMA scheme using intra-beam superposition coding and SIC was proposed in~\cite{higuchi2015non}, in which the basic resource allocation (bandwidth and power) for NOMA was also discussed.
 A power allocation scheme was proposed to maximize the ergodic capacity in~\cite{QsunErgodic}. A novel cluster beamforming strategy was proposed to minimize the total power consumption by jointly optimizing the beamforming and power allocation of a MIMO-NOMA system in~\cite{DingfMIMONOMACluster}.



In addition, NOMA was also studied in multi-carrier communications. The subcarrier and power allocation were jointly optimized to maximize the energy efficiency and the weighted sum throughput for a downlink multi-carrier NOMA network, in~\cite{FangEnergyRe} and~\cite{YsunPowerSubca}, respectively. Also, NOMA was integrated with other communication technologies such as millimeter wave communications \cite{PowBeamNOMAMilli} and backscatter communications \cite{ZhangLiangSRAccess19}. In particular, a backscatter-NOMA system which integrates NOMA and a novel symbiotic radio paradigm (also termed as cooperative ambient backscatter communication~\cite{CooperationAmBC} \cite{LongYangLiangSR18}) was proposed in \cite{ZhangLiangSRAccess19}, and the outage as well as ergotic rate performances were analyzed therein.

\subsubsection{Literature on IRS-assisted Communciations}

IRS which can digitally manipulate the reflected electromagnetic waves is verified to enhance the performance of various wireless communication systems. For an IRS-assisted multiuser MISO wireless system, the transmit beamforming at the BS and the passive beamforming at the IRS were jointly optimized to minimize the total transmission power in \cite{QwuIRS}, while the power allocation at the BS and the phase shifts at the IRS were jointly optimized to maximize the system's sum rate in \cite{HuangYuenIRSICASSP18} and energy efficiency in \cite{HuangYuenIRS18}p, respectively. The minimum-secrecy-rate maximization problem was studied in \cite{ChenLiangIRSAccess2018} for an IRS-assisted downlink multiuser MISO system with multiple eavesdroppers. In\cite{JungLISRateOutage}, the outage probability and the asymptotic distribution of the sum rate was analyzed for an uplink IRS system. In \cite{LIS_quantization}, the ergodic SE performance of an IRS-assisted large-scale antenna system was evaluated and the effect of the phase shifts on the ergodic SE was also investigated. In \cite{TahaCSIRS19}, a deep learning based method was proposed to estimate the IRS's channels, in which the wireless propagation and the IRS are treated as a deep neural network and neurons, respectively.

IRS-assisted communication resembles but differs from existing technologies such as the amplify-and-forward (AF) relay communications and the backscatter communications. A full-duplex AF relay actively processes the received signals before retransmitting the amplified signals, while an IRS passively reflects the signals instead of signal amplification and retransmission, reducing the energy consumption to a large extent \cite{bjornson2019intelligent}. IRS operates in full-duplex mode without introducing self-interference, and provides additional path for the traditional wireless signal without conveying its own information, while a backscatter device transmits its own information by modulating the incident signal from either dedicated source or ambient source \cite{AmbientOFDMCarrier,ZhangLiangJSAC18,LongLiangSRAccess19}.

%

\subsection{Contributions}
In this paper, we consider an IRS-assisted downlink NOMA communication system in which a BS transmits downlink superposed signals to multiple users with the aid of an IRS. The contributions of this paper are outlined as follows:
\begin{itemize}
     \item In order to optimize the rate performance and ensure user fairness, we formulate a problem to maximize the minimum target decoding signal-to-interference-plus-noise-ratio (SINR) (i.e., equivalently the rate) of all users, by jointly optimizing the active beamforming vectors at the BS and the phase shifts (i.e., passive beamforming) at the IRS, subject to the BS's transmit power constraints, the IRS's phase-shift constraints, and the users' SINR constraints for SIC decoding. However, the formulated problem is challenging in two folds: first, the user ordering and the optimization variables are highly coupled; secondly, the problem under given user ordering is still non-convex and difficult to be solved optimally.

\item To decouple the formulated problem, an efficient combined-channel-strength (CCS) based user ordering scheme is proposed for the IRS-assisted NOMA system, which orders the users according to their maximally achievable combined channel strengths by optimizing the IRS's phase shifts. Numerical results show that the designed ordering scheme achieves almost the same rate performance compared to the high-complexity scheme of exhaustively searching over all possible user orderings.

\item To tackle the formulated difficult problem, we first solve it for the special case of a single-antenna BS, in which the beamforming vectors reduces to the power allocation at the BS. We propose an efficient iterative algorithm based on the block coordinated decent (BCD) technique and the semidefinite relaxation (SDR) technique. In each iteration, the optimal power allocation solution for given IRS's phase shifts is derived in closed form, and for the two-user NOMA scenario, the asymptotically optimal phase-shift solution is obtained in closed form. The complexity is analyzed for the proposed algorithm.
\item An extended iterative algorithm is further proposed for the general case of a multi-antenna BS. In each iteration, the SDR technique is applied to obtain a solution of the phase shifts at the IRS for given beamforming matrix at the BS, as well as a solution of the beamforming matrix at the BS for given phase shifts at the IRS. For the transmit beamforming matrix optimization subproblem, it is shown that the rank of the SDR solution is no more than two, independent of the number of NOMA users. The convergence is proved and the complexity is analyzed for the proposed algorithm.
\item Numerical results show that significant rate gains are achieved by our proposed design and algorithms, compared to three benchmark schemes including the traditional NOMA, IRS-assisted OMA and traditional OMA. Moreover, numerical results show that IRS structure with low phase resolution can approximate the best-achievable rate performance of the ideal continuous phase shifts. In addition, numerical results reveal that there is a tradeoff between the number of NOMA users and the achievable sum rate, and good rate fairness can always be guaranteed as the number of NOMA users increases.
\end{itemize}

The rest of this paper is organized as follows. Section~\ref{systemmodel} presents the system model for an IRS-assisted downlink NOMA, and formulates the minimum-throughput maximization problem. Section~\ref{userorderingmul} describes the  CCS-based user ordering scheme. Section~\ref{solutionSin} proposes solutions for the case of a single-antenna BS. Section~\ref{solutionMul} extends the solutions to the general case of a multi-antenna BS. Section~\ref{simulation} presents the numerical results to verify the performance of the proposed joint design. Section~\ref{conclusion} concludes this paper.

The main notations in this paper are listed as follows: The lowercase, boldface lowercase, and boldface uppercase letters, e.g., $g$, $\bg$, and $\bG$, denote a scalar, vector, and matrix, respectively. $|g|$ means the operation of taking the absolute value of a scalar $g$. $\|\bg\|$ means the operation of taking the norm value of a vector $\bg$. $[\bg]^T$ denotes the transpose of a vector $\bg$. $[\bg]^H$ denotes the hermition of a vector $\bg$. $\calC \calN(\mu, \sigma^2)$ denotes the circularly symmetric complex Gaussian (CSCG) distribution with mean $\mu$ and variance $\sigma^2$. $\calC$ denotes the set of complex numbers.

\section{System Model And Problem Formulation}\label{systemmodel}
\subsection{System Model}
As illustrated in Fig. \ref{fig:model}, we consider an IRS-assisted downlink NOMA communication system,  in which a BS equipped with $N \ (N \geq 1)$ antennas transmits superposed signals to $K \ (K\ge 2)$ single-antenna users in the same time and frequency block with the aid of an IRS. The IRS consists of $M \ (M\ge1)$ passive reflecting elements, and each element can reflect a phase-shifted version of the incident signal.  A smart controller which is connected to the IRS can intelligently adjust the phase shifts to assist the NOMA transmission.
\begin{figure} [!t]
	\centering	\includegraphics[width=.7\columnwidth]{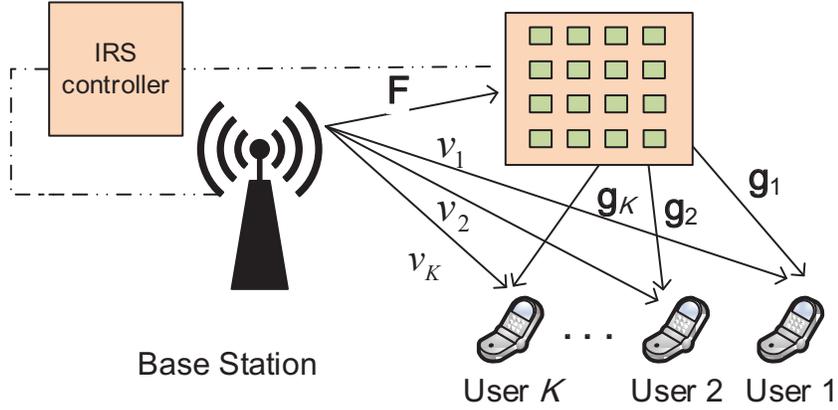}
	\caption{Illustration of an IRS-assisted NOMA System.} \label{fig:model}
\vspace{-0.4cm}
\end{figure}


The channel between the BS and user $j, \ j=1,\ldots,K$, is denoted as $\bv_{j} \in \calC^{N\times1}$. Since the line-of-sight (LoS) path may be blocked, all BS-to-User channels $\bv_j$'s are assumed to be mutually independent and Rayleigh fading. The elements of $\bv_{j}$ are independent and each element follows the distribution $\calC \calN(0, \rho_j)$, where $\rho_j$ denotes the large-scale pathloss from the BS to user $j$. Notice that the IRS is typically pre-deployed such that it can exploit LoS path with the fixed BS. Hence, we use Rician fading to model the channel matrix $\bF \in \calC^{M \times N}$ between the BS and the reflecting elements of the IRS, i.e., 
\begin{align}
\bF =\sqrt{\frac{K_{1} \kappa}{K_{1}+1}}\bar{\bF }+\sqrt{\frac{\kappa}{K_{1}+1}}\tilde{\bF },
\end{align}
where $\kappa$ denotes the large-scale pathloss from the BS to the IRS, $K_{1}$ is the Rician factor of $\bF$, $\bar{\bF }\in \calC^{M\times1}$ and $\tilde{\bF } \in \calC^{M\times1}$ are the LoS component and non-LoS (NLoS) component, respectively. The elements of $\tilde{\bF }$ are independent and each element follows the distribution  $\calC \calN(0,1)$. Since the IRS is typically deployed close to the users to enhance their performance, the channel vector between the IRS and each user $k$ is modeled as
\begin{align}
\bg_j =\sqrt{\frac{K_{2} \beta_j}{K_{2}+1}}\bar{\bg}_j+\sqrt{\frac{\beta_j}{K_{2}+1}}\tilde{\bg}_j,
\end{align}
 where $\beta_j$ denotes the large-scale pathloss from the IRS to the user $j$, $K_{2}$ is the Rician factor of $\bg$, $\bar{\bg}_j \in \calC^{M\times1}$ and $\tilde{\bg}_j \in \calC^{M\times1}$ are the LoS component and the NLoS component, respectively, which are generated similar to $\bar{\bF}$ and $\tilde{\bF}$. We assume that all channel state information is perfectly known.

The BS transmits a superposition of $K$ data flows, each of which is assigned with one dedicated beamforming vector. That is, the transmitted complex baseband signal is
\begin{align}
x=\sum_{j=1}^{K}\bomega_{j} x_{j},
\end{align}
where $x_{j}\sim \calC \calN(0,1)$ is the data flow intended to user $j$ and $\bomega_{j}$ is the corresponding beamforming vector. 
The signal received at user $j$ is then given by
\begin{align}
y_{j}=(\mathbf{g}_{j}^{H}\mathbf{\Theta}\mathbf{F}+\mathbf{v}_{j}^{H})x+n_{j},
\end{align}
where the IRS's diagonal phase-shift matrix $\mathbf{\Theta} \triangleq \diag \{e^{j\theta_{1}},\cdots,e^{j\theta_{M}}\}$ with $\theta_{m}\in[0,2\pi)$ denotes the phase shifts of the reflecting elements, and $n_{j}\sim\mathcal{CN}(0,\sigma^{2})$ denotes the additive white Gaussian noise (AWGN) at  user $j$.

The users in downlink NOMA systems employ SIC technique to decode signals. Preliminary, it is necessary to order users according to their effective channel gains \cite{NOMAIN5G} \cite{NOMADingSignalProcessing}. However, for IRS-assisted NOMA, since the combined channel $\bh_j \triangleq \mathbf{g}_{j}^{H}\mathbf{\Theta}\mathbf{F}+\bv_{j}$ also depends on the phase-shift values $\mathbf{\Theta}$, the optimal user ordering may be any one of all the $K!$ different orders. We will propose an efficient user ordering scheme in Section~\ref{userorderingmul}.

After the user ordering operation, we assume that user $j$ has the $k$-th weakest combined channel, for $k=1,\ldots,K$, thus user $j$ is termed as the $k$-th user in the sequel of this paper (except Section \ref{userorderingmul} for user ordering design), for notational convenience. The $k$-th user is always able to sequentially decode the signals of the $t$-th user, for $t =1, \ldots, k-1$, and then extract them from the received signal. The corresponding SINR for the $k$-th user decoding the $t$-th data flow is given by
\begin{align}\label{gammatk}
\gamma_{t \rightarrow k}&=\frac{|(\mathbf{g}_{k}^{H}\mathbf{\Theta}\mathbf{F}+\mathbf{v}^{H}_{k})\bomega_{t}|^{2}}{\sum \limits_{i=t+1}^{K}|(\mathbf{g}_{k}^{H}\mathbf{\Theta}\mathbf{F}+\mathbf{v}^{H}_{k})\bomega_{i}|^{2}+\sigma^{2} }, \qquad 1\leq t< k \leq K.
\end{align}

After cancelling the interference signals from all weaker users each with index in the set $\{1,\ldots,k-1\}$, the $k$-th user decodes the $k$-th data flow by treating the signals from the rest users as interference. The SINR for the $k$-th user decoding its own signal is expressed as
\begin{align}\label{gammak}
\gamma_{k \rightarrow k}=\frac{|(\mathbf{g}_{k}^{H}\mathbf{\Theta}\mathbf{F}+\mathbf{v}^{H}_{k})\bomega_{k}|^{2}}{\sum \limits_{i=k+1}^{K}|(\mathbf{g}_{k}^{H}\mathbf{\Theta}\mathbf{F}+\mathbf{v}^{H}_{k})\bomega_{i}|^{2}+\sigma^{2} },1\leq k\leq K.
\end{align}

In order to ensure that the $k$-th user can decode the $t$-th data flow correctly and cancel its resulting interference, the SINR for the $k$-th user decoding the $t$-th data flow (i.e., $\gamma_{t\rightarrow k}$) needs to be no smaller than the target SINR of the $t$-th user, denoted by $\gamma_t^{\sf tar}$ \cite{AMimaOptisum}\cite{liu2017joint}, i.e., $\gamma_{t \rightarrow k} \geq \gamma_t^{\sf tar}, \forall t < k$. Then the target SINR of the $t$-th user is expressed as follows
\begin{align}
\gamma_t^{\sf tar}=\min\{\gamma_{t \rightarrow t}, \gamma_{t \rightarrow t+1}, \ldots, \gamma_{t \rightarrow K}\}, \label{gammat_target} \;\; \forall t.
\end{align}

Thus, the corresponding rate for the $t$-th user is thus
\begin{align}\label{Rk}
R_{t}\!=\!\log_2\left(1 \!+\!\gamma_t^{\sf tar} \right),  \forall t.
\end{align}
From \eqref{Rk}, the introducing of IRS is beneficial for downlink NOMA systems, since the use of IRS provides not only additional channel pathes (i.e., $\mathbf{g}_{k}^{H}\mathbf{\Theta}\mathbf{F}$) but also additional design degrees of freedom (i.e., $\bTheta$). By jointly designing the beamforming vectors $\{\bomega_{k}\}$ at the BS and the phase shifts $\bTheta$ at the IRS, the rate performance can be improved.

%


\subsection{Problem Formulation}\label{formulation}

To maximize the system's rate performance while ensuring the fairness among users, as in~\cite{FairnessSignal}~\cite{LiuDingFairnessCL2016}, \textcolor{black}{we maximize the minimum target SINR in \eqref{gammat_target} (equivalently the achievable rate in \eqref{Rk}) of users by jointly optimizing the beamforming vectors $\{\bomega_{k}\}$ at the BS and the phase shifts $\bTheta$ at the IRS.} The optimization problem can be formulated as
\begin{subequations}
\label{eq:P1}
\begin{align}
&\text{(P1): }\underset{\{\bomega_{k}\}, \mathbf{\Theta}, Q}{\max}  \quad  Q \\
&\text{s.t.} \quad \frac{|(\mathbf{g}_{t}^{H}\mathbf{\Theta}\mathbf{F}+\mathbf{v}_{t}^{H})\bomega_{t}|^{2}}{\sum \limits_{i=t+1}^{K}|(\mathbf{g}_{t}^{H}\mathbf{\Theta}\mathbf{F}+\mathbf{v}_{t}^{H})\bomega_{i}|^{2}+\sigma^{2} } \geq Q, \; \forall t \label{eq1:maxminthroughput}\\
&\quad \quad \frac{|(\mathbf{g}_{k}^{H}\mathbf{\Theta}\mathbf{F}+\bv_{k}^{H})\bomega_{t}|^{2}}{\sum \limits_{i=t+1}^{K}|(\mathbf{g}_{k}^{H}\mathbf{\Theta}\mathbf{F}+\bv_{k}^{H})\bomega_{i}|^{2}+\sigma^{2} } \geq Q,  \quad 1\leq t < k \leq K \label{eq1:SICconstraint}\\
&\quad \quad \|\mathbf{g}_{K}^{H}\mathbf{\Theta}\mathbf{F}+\bv_{K}^{H}\|^{2}\geq\|\mathbf{g}_{K-1}^{H}\mathbf{\Theta}\mathbf{F}+\bv_{K-1}^{H}\|^{2} \geq\cdots\geq\|\mathbf{g}_{1}^{H}\mathbf{\Theta}\mathbf{F}+\bv_{1}^{H}\|^{2} \label{eq1:decodingorderconstraint}\\
&\quad \quad \sum \limits_{k=1}^{K}\|\bomega_{k}\|^{2}\leq P \label{eq1:sumpowerallocationconstraint}\\
&\quad \quad 0 \leq\theta_{m} \leq 2\pi,\; \forall m. \label{eq1:Phase-shiftingmatrixconstraint}
\end{align}
\end{subequations}
\textcolor{black}{Note that \eqref{eq1:maxminthroughput} ensures that the SINR of $t$-th user decoding its own signal (i.e., $\gamma_{t \rightarrow t}$) exceeds $Q$, where $Q$ is a slack variable signifying the minimum target SINR to be maximized; \eqref{eq1:SICconstraint} ensures that the SINR of $k$-th user decoding $t$-th data flow (i.e., $\gamma_{t \rightarrow k}$) exceeds $Q$;} \eqref{eq1:decodingorderconstraint} is the combined channel strength constraint under the determined user ordering; \eqref{eq1:sumpowerallocationconstraint} is the constraint of the BS's transmission power; \eqref{eq1:Phase-shiftingmatrixconstraint} is the phase-shift constraints of the IRS's reflecting elements.

Although the optimization problem (P1) is appealing in practice, it is challenging to be solved directly due to the coupled variables (i.e., $\{\bomega_{k}\}, \mathbf{\Theta}, Q$) and the non-convex constraints \eqref{eq1:maxminthroughput} and \eqref{eq1:SICconstraint}. In order to solve Problem (P1) effectively, we will first consider the special case of single-antenna BS in Section \ref{solutionSin}, and then study the general case of multi-antenna BS in Section \ref{solutionMul}.


\section{CCS-based User Ordering Design}\label{userorderingmul}
User ordering is essential for SIC-based decoding and performance optimization in IRS-assisted NOMA systems. With the exhaustive search scheme adopted in \cite{MuNaofalNOMAIRS19}, it needs to solve $K!$ complex sum-rate maximization problems each of which is for particular user ordering, thus the complexity is high especially when the number of users $K$ is relatively large. The BS-to-User channel strength based user ordering adopted in \cite{FuShi19} is simple, but may suffer from poor performance, due to ignoring the effect of the IRS.

In this section, an efficient CCS-based user ordering scheme is designed. That is, the user ordering is determined according to all users' maximally achievable combined channel strengths each of which is obtained by optimizing the IRS phase shifts. Under the designed ordering, it just needs to solve the max-min rate maximization problem once. Numerical results in Section \ref{solution} will show that the designed ordering scheme suffers fromp slight rate degradation compared to exhaustive search scheme.


Specifically, the maximally achievable strength of the combined channel for the $j$-th user can be obtained by solving the following problem
\begin{subequations}
\label{eq:P0}
\begin{align}
&\text{(P-Order): }\underset{ \mathbf{\Theta}}{\max}  \quad  \|\mathbf{g}_{j}^{H}\mathbf{\Theta}\mathbf{F}+\bv_{j}^{H}\|^{2} \\
&\text{s.t.} \quad  0 \leq\theta_{m} \leq 2\pi,\; \forall m. \label{eq0.1:Phase-shiftingmatrixconstraint}
\end{align}
\end{subequations}
Recall $\mathbf{\Theta}=\diag \{e^{j\theta_{1}},\ldots,e^{j\theta_{m}},\ldots,e^{j\theta_{M}}\}$. We denote $e_{m}=e^{j\theta_{m}}$, and $\be=[e_{1}, \ldots, e_{M}]^H$. Then the constraint ~\eqref{eq0.1:Phase-shiftingmatrixconstraint} is equivalent to $|e_{m}|=1,\forall m$. Let  $\bJ_{j}=\diag(\mathbf{g}_{j}^{H})\mathbf{F}$. Then the term $\|\mathbf{g}_{j}^{H}\mathbf{\Theta}\mathbf{F}+\bv_{j}^{H}\|^{2}$ can be rewritten as $\|\be^{H}\bJ_{j}+\bv_{j}^{H}\|^{2}$.
By introducing $\bar{\be}=[\be;1]$ and
\begin{align}\label{eq:Sj}
\bS_{j}=\left[
\begin{array}{ccc}
 \bJ_{j} \bJ_{j}^{H}& \bJ_{j}\bv_{j} \\
\bv_{j}^{H}\bJ_{j}^{H} & 0
\end{array}
\right],
\end{align}
the $\|\be^{H}\bJ_{j}+\bv_{j}^{H}\|^{2}$ can be further transformed to $\bar{\be}^{H}\bS_{j}\bar{\be}+\|\bv_{j}^{H}\|^{2}$.
Note that $\bar{\be}^{H}\bS_{j}\bar{\be}=\trace(\bS_{j}\bar{\be}\bar{\be}^{H})$. Define the matrix $\bE=\bar{\be}\bar{\be}^{H}$, which needs to satisfy $\bE\succeq0$ and $\rank(\bE)=1$. Since the rank-one constraint is non-convex, we exploit the SDR technique to relax (P-Order) as follows
\begin{subequations}
\label{eq:P0-SDR}
\begin{align}
\text{(P-Order-SDR): }&\underset{ \mathbf{E}}{\max}  \quad  \trace(\bS_{j}\mathbf{E}) \!+\! \|\bv_{j}\|^{2} \\
& \text{s.t.} \quad \quad \mathbf{E}\succeq 0 \label{eq2.2:powerallocationconstriant}\\
&\quad \quad \quad  E_{m,m}==1. \label{eq2.2:sumpowerallocationconstriant}
\end{align}
\end{subequations}

The optimal $\mathbf{E}$ obtained by solving (P-Order-SDR) generally does not satisfy the rank-one constraint, and a Gaussian randomization scheme can be applied to obtain a rank-one solution, which is described as follows. Firstly, we obtain the eigenvalue decomposition of $\bE$ as $\bE=\bU\bSigma \bU^{H}$. Define $\bSigma^{\frac{1}{2}}\triangleq \diag\{\sqrt{\lambda_{1}},  \cdots, \sqrt{\lambda_{M+1}}\}$. A random vector is generated as follows
\begin{align}
\tilde{\be}=\bU \bSigma^{\frac{1}{2}}\br,
\end{align}
where the random vector $\br\sim \mathcal{CN}(\mathbf{0},\mathbf{I}_{M+1})$. Then we generate $\hat{\be}$ as $\hat{\be}=e^{j \angle([\frac{\tilde{\be}}{\tilde{\be}_{M+1}}]_{(1:M)})}$, where $[\bx]_{(1:M)}$ denotes the vector containing the first $M$ elements in $\bx$. The objective value of (P-Order) is approximated as the maximal one achieved by the best $\hat{\be}$ among all $\br$'s. It has been verified that SDR technique followed by such randomization scheme can guarantee at least a $\frac{\pi}{4}$-approximation of the optimal objective value of (P-Order) \cite{QwuIRS}.

After solving the above $K$ optimization problems for user ordering, we can obtain the maximally achievable combined channel strength of each user. The user ordering is determined according to the users' achievable combined channel strengths.




%

\section{Optimal Solution for Single-Antenna Base Station Case}\label{solutionSin}
In this section, to obtain more insights on the system design, we solve the max-min rate maximization problem for the single-antenna BS case, i.e., $N=1$. In this case, the channel matrix $\bF \in \calC^{M \times N}$ between the BS and the IRS reduces to the channel vector $\mathbf{f} \in \calC^{M \times 1}$, the channel vector $\bv_k$ between the BS and the $k$-th user reduces to the channel coefficient $v_k$, and
the beamforming vectors $\{\bomega_{k}\}$  in (P1) reduce to the power allocation vector $\balpha R^{K \times N}$. The designed CCS-based user ordering scheme is used herein. Thus Problem (P1) for the single-antenna BS case is rewritten as follows
\begin{subequations}
\label{eq:P2}
\begin{align}
&\text{(P2): }\underset{\mathbf{\alpha} , \mathbf{\Theta}, Q}{\max}  \quad  Q \\
&\text{s.t.} \quad \frac{\alpha_{k} P|\mathbf{g}_{k}^{H}\mathbf{\Theta}\mathbf{f}+v_{k}^{H}|^{2}}{\sum \limits_{i=k+1}^{K}\alpha_{i} P|\mathbf{g}_{k}^{H}\mathbf{\Theta}\mathbf{f}+v_{k}^{H}|^{2}+\sigma^{2} } \geq Q, \; \forall k \label{eq2:maxminthroughput}\\
&\quad \quad |\mathbf{g}_{K}^{H}\mathbf{\Theta}\mathbf{f}+v_{K}^{H}|^{2}\geq|\mathbf{g}_{K-1}^{H}\mathbf{\Theta}\mathbf{f}+v_{K-1}^{H}|^{2}\geq\cdots\geq|\mathbf{g}_{1}^{H}\mathbf{\Theta}\mathbf{f}+v_{1}^{H}|^{2} \label{eq2:decodingorderconstraint}\\
&\quad \quad \sum \limits_{k=1}^{K}\alpha_{k}\leq 1 \label{eq2:sumpowerallocationconstraint}\\
&\quad \quad \alpha_k \geq 0, \; \forall k \label{eq2:powerallocationconstraint}\\
&\quad \quad 0 \leq\theta_{m} \leq 2\pi,\; \forall m. \label{eq2:Phase-shiftingmatrixconstraint}
\end{align}
\end{subequations}

  \textcolor{black}{It can be easily verified that once the user ordering constraint \eqref{eq2:decodingorderconstraint} is satisfied, the inequality $\gamma_{t\rightarrow k}\geq \gamma_{t\rightarrow t}$ always holds. Thus, the constraint $\gamma_{t\rightarrow k}\geq Q$ (similar to \eqref{eq1:SICconstraint}) is omitted in the single-antenna case.} \eqref{eq2:sumpowerallocationconstraint} and \eqref{eq2:powerallocationconstraint} are the normalization constraint and non-negative constraints of the BS's power allocation coefficients.

  It is difficult to solve problem (P2) due to the non-convex constraints \eqref{eq2:maxminthroughput} and \eqref{eq2:decodingorderconstraint}. Since the two blocks of variables $\balpha$ and $\bTheta$ are coupled in \eqref{eq:P2}, we exploit the BCD (i.e., blocked coordinate decent) and SDR (i.e., semidefinite-relaxation) techniques to solve it approximately. In each iteration $n=1,2,\ldots$, we optimize different blocks of variables alternatively. Therefore, Problem (P2) is decoupled into two subproblems in each iteration, as described in Subsection \ref{PhaseShiftOptimizationSin} and Subsection \ref{Power Allocation OptimizationSin}. In the sequel, $\balpha^n$ and $\bTheta^n$ with superscript $n$ indicate their values after the $n$-th algorithmic iteration.

\subsection{Phase Shift Optimization}\label{PhaseShiftOptimizationSin}
\subsubsection{General Solution for $K \geq 2$}
In each iteration $n$, for given power allocation coefficients $\balpha^{n-1}$, the phase shifts $\bTheta$ can be optimized by solving the following problem
\begin{subequations}
\label{eq:P2.1}
\begin{align}
&\text{(P2.1):}\quad \underset{\bTheta,Q}{\max}  \quad  Q \\
&\text{s.t.} \quad \eqref{eq2:maxminthroughput}, \eqref{eq2:decodingorderconstraint}, \eqref{eq2:Phase-shiftingmatrixconstraint}.
\end{align}
\end{subequations}

Let $\bl_{k}=\diag(\mathbf{g}_{k}^{H})\mathbf{f}_{k}$, then the term $|\mathbf{g}_{k}^{H}\mathbf{\Theta}\mathbf{f}+v_{k}|^{2}$ in \eqref{eq2:maxminthroughput} and \eqref{eq2:decodingorderconstraint} can be rewritten as $|\be^{H}\bl_{k}+v_{k}|^{2}$, with ${\be}=[e^{j\theta_{1}},\ldots,e^{j\theta_{m}},\ldots,e^{j\theta_{M}}]^H$. Recall that $\bE=\bar{\be}\bar{\be}^{H}$, with $\bar{\be}=[\be;1]$. By introducing
\begin{align}
\bR_{k}=\left[
\begin{array}{ccc}
 \bl_{k} \bl_{k}^{H}& \bl_{k}v_{k} \\
v_{k}^{H}\bl_{k}^{H} & 0
\end{array}
\right],
\end{align}
we have $|\mathbf{g}_{k}^{H}\mathbf{\Theta}\mathbf{f}+v_{k}|^{2}=\trace(\bR_{k}\mathbf{E}) \!+\! |v_{k}|^{2}$. Problem (P2.1) can be further transformed into the following problem
\begin{subequations}
\label{eq:P2.2}
\begin{align}
&\text{(P2.2):}\quad \underset{\mathbf{E},Q}{\max}  \quad  Q \\
&\text{s.t.}  \; P \left( \!\alpha_{k}^n\!-\!\sum \limits_{i=k+1}^{K}\alpha_{i}^n Q \!\right) \left(\trace(\bR_{k}\mathbf{E}) \!+\! |v_{k}|^{2}\right) \geq Q \sigma^{2}, \forall k \label{eq2.2:maxminthroughput}\\
&\quad \quad \trace(\bR_{K}\mathbf{E}) \!+\! |v_{K}|^{2} \geq \trace(\bR_{K-1}\mathbf{E}) \!+\! |v_{K-1}|^{2} \geq\cdots\geq \trace(\bR_{1}\mathbf{E}) \!+\! |v_{1}|^{2} \label{eq2.2:decodingorderconstraint}\\
&\quad \quad \mathbf{E}\succeq 0 \label{eq2.2:powerallocationconstriant}\\
&\quad \quad  [\bE]_{m,m}=1. \label{eq2.2:sumpowerallocationconstriant}
\end{align}
\end{subequations}

However, (P2.2) is still non-convex due to the non-convex constraint \eqref{eq2.2:maxminthroughput}. It can be solved by using the following two-step procedure. First, the bisection search method can be used to decouple $Q$ and $\bTheta$, and the SDR technique can be used to obtain the optimal $\bE$. Specifically, with certain $Q_{\max}$ and $Q_{\min}$, we replace $Q$ in (P2.2) by $\frac{Q_{\max}+Q_{\min}}{2}$, and solve the resulting feasibility problem reduced from (P2.2). The update of $Q_{\max}$ and $Q_{\min}$ depends on whether a feasible $\bE$ can be found. It can be checked that for sufficiently large $Q_{\max}$ and small $Q_{\min}$, the above bisection search over $Q$ can give a globally optimal phase-shift-related matrix $\bE^{n+1}$ in the $n$-th alterative iteration. Second, the Gaussian randomization technique described in Section \ref{userorderingmul} can be applied to obtain a suboptimal rank-one solution $\be$.

\subsubsection{Closed-form Solution for $K=2$}
The above general procedure for optimizing $\bTheta$ involves the bisection search, SDR and randomization-based approximation, is thus complicated in practice. In this section, a closed-form solution is proposed for the two-user NOMA scenario, i.e., $K=2$.

For notational convenience, we rewrite the channel elements as $[\mathbf{g}_{k}^{H}]_i=|[\mathbf{g}_{k}^{H}]_i| e^{j\varphi_{k,i}}$, $[\mathbf{f}]_{i}=|[\mathbf{f}]_{i}| e^{j\psi_i}$ and $v_{k}= |v_{k}| e^{j\xi_k}$, for $i=1,\ldots, M$, where $[\mathbf{g}_{k}^{H}]_i$ and $[\mathbf{f}]_{i}$ denote the channel from the $k$-th user to the $i$-th reflecting element of the IRS and the channel from the BS to the $i$-th reflecting element of the IRS, respectively. We have the following proposition on the optimal phase shifts $\bTheta$.
\begin{mypro}\label{Proposition:Prop1}
For two-user NOMA scenario, as the transmission power $P$ at the BS increases, the asymptotically optimal phase shifts of the IRS is given by
\begin{align}
\theta_{i}=\xi_2-\varphi_{2,i}-\psi_{i},  \quad {\text{for}} \  i=1,\ldots, M.
\end{align}
\end{mypro}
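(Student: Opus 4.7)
The plan is to exploit the fact that in the high-SNR regime the joint optimization in (P2) collapses to a single-user channel-alignment problem for the stronger user~2. First, I would fix $\bTheta$ (and hence the combined channel gains $|h_k|^2 \triangleq |\mathbf{g}_k^H\bTheta\mathbf{f}+v_k|^2$, $k=1,2$) and analyze the inner problem over $\balpha$. Setting $\alpha_1+\alpha_2=1$ (which must hold at any optimum, else both SINRs can be scaled up), and noting that $\gamma_{1\to 1}$ is strictly increasing while $\gamma_{2\to 2}$ is strictly decreasing in $\alpha_1$, the max-min SINR is attained at the unique $\alpha_1$ at which the two SINRs are equal to a common value $Q^\star$.

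Second, this equalization together with $\alpha_1+\alpha_2=1$ yields a scalar quadratic in $Q^\star$:
\begin{equation*}
aQ^{\star 2}+(a+b)Q^\star-1=0,\qquad a\triangleq \frac{\sigma^2}{P|h_2|^2},\quad b\triangleq \frac{\sigma^2}{P|h_1|^2},
\end{equation*}
whose positive root is $Q^\star=[-(a+b)+\sqrt{(a+b)^2+4a}]/(2a)$. Sending $P\to\infty$, both $a$ and $b$ vanish and $(a+b)^2=o(a)$, so $Q^\star = \sqrt{P|h_2|^2/\sigma^2}\,(1+o(1))$. The leading term depends only on $|h_2|^2$ and is strictly increasing in it; therefore, asymptotically, the outer maximization over $\bTheta$ reduces to $\max_{\bTheta}|h_2|^2$.

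Third, writing $h_2 = \sum_{i=1}^{M}|[\mathbf{g}_2^H]_i|\,|[\mathbf{f}]_i|\,e^{j(\varphi_{2,i}+\theta_i+\psi_i)} + |v_2|e^{j\xi_2}$ and invoking the triangle inequality, $|h_2|^2$ is maximized when all $M+1$ complex summands are co-phased. Since the direct-path term $v_2$ has fixed phase $\xi_2$, the common phase must be $\xi_2$; this yields $\varphi_{2,i}+\theta_i+\psi_i=\xi_2$, i.e., $\theta_i=\xi_2-\varphi_{2,i}-\psi_i$, which is precisely the claimed formula. Finally, the CCS-based user ordering of Section~\ref{userorderingmul} guarantees that at this phase shift $|h_2|^2=\max_{\bTheta}|h_2|^2 \ge \max_{\bTheta}|h_1|^2 \ge |h_1|^2$, so the decoding-order constraint~\eqref{eq2:decodingorderconstraint} is automatically satisfied.

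The main obstacle I expect is making the word \emph{asymptotic} precise: I must rule out that the $o(1)$ correction to $Q^\star$ carries enough hidden $|h_1|^2$-dependence to overturn the $|h_2|^2$-maximization at finite but large~$P$. A clean way is to show that for any competing $\bTheta'$, $Q^\star(\bTheta)/Q^\star(\bTheta') \to \sqrt{|h_2(\bTheta)|^2/|h_2(\bTheta')|^2}$ as $P\to\infty$, with strict inequality whenever $|h_2(\bTheta)|^2 > |h_2(\bTheta')|^2$, so that the $|h_2|^2$-maximizer is truly asymptotically optimal for (P2); some care is also required at the boundary $|h_1|^2\to 0$ where $b$ diverges.
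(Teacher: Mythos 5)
Your proposal is correct, and it reaches the paper's conclusion by a genuinely different route. The paper's own proof keeps the power pair $(\alpha_1,\alpha_2)$ fixed (consistent with the BCD phase-shift subproblem in which Proposition~\ref{Proposition:Prop1} is used) and observes that once $\alpha_2 P|h_1|^2 \gg \sigma^2$ the SINR $\gamma_{1\rightarrow1}\approx \alpha_1/\alpha_2$ is essentially insensitive to $\bTheta$, so maximizing the minimum reduces to maximizing $\gamma_{2\rightarrow2}$, i.e.\ $|h_2|^2$; the final co-phasing step is identical to yours. You instead eliminate the power variables first via SINR balancing (in effect Lemma~\ref{lemma:Lem1}/Theorem~\ref{mythm:thm1} specialized to $K=2$), obtain the balanced SINR as the positive root of $aQ^2+(a+b)Q-1=0$ with $a=\sigma^2/(P|h_2|^2)$, $b=\sigma^2/(P|h_1|^2)$, and show $Q^\star=\sqrt{P|h_2|^2/\sigma^2}\,(1+o(1))$, so the outer maximization over $\bTheta$ asymptotically depends only on $|h_2|^2$. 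Your version buys several things the paper's argument does not: it treats the joint problem (P2) rather than the phase-shift subproblem at given $\balpha$, it quantifies the asymptotics (the $\sqrt{P}$ leading term versus a bounded correction $\tfrac12(1+|h_2|^2/|h_1|^2)$ that carries the $|h_1|$-dependence, which is precisely why the solution is only asymptotically optimal), and it verifies that the co-phased solution satisfies the ordering constraint \eqref{eq2:decodingorderconstraint} through the CCS ordering, a point the paper leaves implicit; the paper's version is shorter and matches the algorithmic context where $\balpha$ is supplied by the previous iteration. Your closing caveats (uniformity of the $o(1)$ term over competing $\bTheta'$, and $|h_1|^2$ bounded away from zero) are the right places to tighten, but they do not affect correctness at the level of rigor the paper itself employs.
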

\begin{proof}
  See proof in Appendix \ref{app:prop1}.
\end{proof}

\begin{myrem}
Notice that the closed-form solution in Proposition 1 is asymptotically optimal for large $P$, but suboptimal for small or moderate $P$. This is explained as follows. When $P$ is small or moderate, the noise term $\sigma^2$ in the denominator of \eqref{gamma11} is not negligible compared to the interference term $\alpha_2 P |h_1|^2$. And both SINRs $\gamma_{1 \rightarrow 1}$ and $\gamma_{2 \rightarrow 2}$ are monotonically increasing functions of $|h_1|^2$ and $|h_2|^2$, respectively. Hence, to maximize the minimum value between both SINRs, the optimal phase shifts $\bTheta$ should be designed to enhance both $|h_1|^2$ and $|h_2|^2$ in general.  Numerical results show that even for small or moderate $P$, the above closed-form solution suffers from slight rate performance degradation compared to the solution achieved by the general algorithm.
\end{myrem}


\subsection{Power Allocation Optimization}\label{Power Allocation OptimizationSin}

In each iteration $n$, for given phase shifts $\bTheta^n$, the power allocation $\balpha$ can be optimized by solving the problem 
\begin{subequations}
\label{eq:P2.3}
\begin{align}
&\text{(P2.3):}\quad \underset{\balpha,Q}{\max} \quad  Q \\
&\text{s.t.} \quad \eqref{eq2:maxminthroughput}, \eqref{eq2:sumpowerallocationconstraint}, \eqref{eq2:powerallocationconstraint}.
\end{align}
\end{subequations}



 We have the following lemma on the optimal solution to (P2.3)
\begin{mylem}
\label{lemma:Lem1}
The optimal power allocation vector of Problem (P2.3), denoted by $\tilde{\balpha}=[\tilde{\alpha}_1, \tilde{\alpha}_2, \ldots, \tilde{\alpha}_K]$, is the unique solution to the following equations
\begin{align}\label{eq:P2.4_equal}
\frac{\tilde{\alpha}_{1} P|h_{1}|^{2}}{\sum \limits_{i=2}^{K}\tilde{\alpha}_{i} P|h_{k}|^{2}+\sigma^{2}}=\cdots=\frac{\tilde{\alpha}_{k} P|h_{k}|^{2}}{\sum \limits_{i=k+1}^{K}\tilde{\alpha}_{i} P|h_{k}|^{2}+\sigma^{2}}=\cdots = \frac{\tilde{\alpha}_{K} P|h_{K}|^{2}}{\sigma^{2}}=Q,
\end{align}
\begin{align}\label{eq:P2.4_sum}
\sum \limits_{k=1}^{K}\tilde{\alpha}_{k}= 1.
\end{align}
\end{mylem}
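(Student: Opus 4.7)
The plan is to reduce the max-min power-allocation problem to a single-variable monotone equation via backward induction on the SIC order, and then conclude by a standard tightness argument. For any candidate target SINR value $Q \geq 0$, I would define recursively, from $k = K$ down to $k = 1$, the minimum power fraction $\tilde{\alpha}_k(Q)$ required at user $k$ so that $\gamma_{k \to k} \geq Q$ holds under the assumption that users $i > k$ have already been assigned $\tilde{\alpha}_i(Q)$. Solving the SINR inequality in \eqref{eq2:maxminthroughput} explicitly gives
\[
\tilde{\alpha}_K(Q) = \frac{Q\sigma^2}{P|h_K|^2}, \qquad \tilde{\alpha}_k(Q) = Q\sum_{i=k+1}^{K}\tilde{\alpha}_i(Q) + \frac{Q\sigma^2}{P|h_k|^2}, \quad k<K.
\]
By induction on $k$ (downwards), each $\tilde{\alpha}_k(Q)$ is a continuous, strictly increasing function of $Q$ on $[0,\infty)$ with $\tilde{\alpha}_k(0) = 0$.

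Next, let $S(Q) \triangleq \sum_{k=1}^{K} \tilde{\alpha}_k(Q)$. Being a finite sum of continuous strictly increasing functions that vanish at $Q=0$ and diverge as $Q \to \infty$, $S$ is itself continuous, strictly increasing, with $S(0)=0$ and $S(Q) \to \infty$. Hence there exists a \emph{unique} $Q^\star > 0$ with $S(Q^\star) = 1$. I would then claim that $Q^\star$ is exactly the optimal value of (P2.3) and that $(\tilde{\alpha}_1(Q^\star),\ldots,\tilde{\alpha}_K(Q^\star))$ is its unique optimizer.

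For this, pick any feasible $(\balpha, Q)$ of (P2.3). A downward induction from $k = K$ using \eqref{eq2:maxminthroughput} establishes $\alpha_k \geq \tilde{\alpha}_k(Q)$ for every $k$: the base case $\alpha_K \geq Q\sigma^2/(P|h_K|^2)$ is immediate from $\gamma_{K\to K} \geq Q$, and for $k < K$, combining $\gamma_{k \to k}(\balpha) \geq Q$ with the already-proved bounds on $\alpha_i$ for $i > k$ yields the claim. Summing over $k$ and invoking \eqref{eq2:sumpowerallocationconstraint} gives $S(Q) \leq 1 = S(Q^\star)$, whence $Q \leq Q^\star$ by monotonicity of $S$. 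Since $(\tilde{\alpha}_1(Q^\star),\ldots,\tilde{\alpha}_K(Q^\star))$ itself is feasible (all SINR constraints hold with equality, and the sum is exactly $1$) and achieves $Q^\star$, optimality is established. Uniqueness at $Q = Q^\star$ then follows by a squeezing argument: $\alpha_k \geq \tilde{\alpha}_k(Q^\star)$ together with $\sum_k \alpha_k \leq 1 = \sum_k \tilde{\alpha}_k(Q^\star)$ forces $\alpha_k = \tilde{\alpha}_k(Q^\star)$ for every $k$; equivalently, all SINR inequalities \eqref{eq2:maxminthroughput} and the power constraint \eqref{eq2:sumpowerallocationconstraint} hold with equality, which is exactly \eqref{eq:P2.4_equal} and \eqref{eq:P2.4_sum}.

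The only delicate step is the downward induction $\alpha_k \geq \tilde{\alpha}_k(Q)$, which must be written carefully so as to remain non-circular: each step uses only the bounds on users with \emph{larger} index, so the induction genuinely proceeds from $K$ down to $1$ and the auxiliary quantities $\tilde{\alpha}_i(Q)$ on the right-hand side of the recursion are well-defined at the point of use. The monotonicity of $S(Q)$ and the resulting solve-for-$Q^\star$ step are routine, and no Lagrangian or convexity machinery is needed; the argument is entirely algebraic.
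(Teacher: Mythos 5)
Your proof is correct and follows essentially the same route as the paper's: the same backward recursion for $\tilde{\alpha}_k(Q)$, the same downward-induction componentwise domination showing no feasible allocation can fall below it, and the same strict monotonicity in $Q$ to pin down the unique value satisfying the sum constraint. The only difference is organizational—you merge the paper's separate optimality-by-contradiction and uniqueness arguments into a single squeeze via $S(Q)=\sum_k\tilde{\alpha}_k(Q)$, which is a slightly cleaner packaging of the same idea.
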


\begin{proof}
See proof in Appendix \ref{app:lemma1}.
\end{proof}

From Lemma \ref{lemma:Lem1}, the optimal solution $\tilde{\balpha}$ can be obtained by solving $K$ nonlinear equations in \eqref{eq:P2.4_equal} and \eqref{eq:P2.4_sum}. Fortunately, the optimal solution $\tilde{\balpha}$ can be obtained in closed form. To such end, we define the length-$K$ column vector with all-one elements as $\textbf{1}$, the diagonal matrix $\bD=\diag{\{\frac{1}{|h_1|^2}, \frac{1}{|h_2|^2}, \ldots, \frac{1}{|h_K|^2}}\}$, the upper-triangle matrix $\bPsi$ with elements $[\bPsi]_{ik}=|h_i|^2$ for $i<k$ and $[\bPsi]_{ik}=0$ for $i \geq k$, and the following $(K+1)$-dimensional square matrix
\begin{align}
\bPi=\left[\begin{array}{ccc}
 \bD \bPsi & \frac{\sigma^2}{P}\bD  \textbf{1}\\
 \textbf{1}^T  \bD \bPsi  & \frac{\sigma^2}{P}\textbf{1}^T\bD \textbf{1}
\end{array}
\right].
\end{align}
The closed-form solution is then given in the following Theorem \ref{mythm:thm1}.
\begin{mythe}
\label{mythm:thm1}
The optimal objective value of Problem (P2.3) is
\begin{align}
Q=\frac{1}{\lambda_{\max}(\bPi)},
\end{align}
where $\lambda_{\max}(\bPi)$ denotes the largest eigenvalue of the matrix $\bPi$, and the optimal power allocation vector $\tilde{\balpha}$ is obtained as the first $K$ components of $\bPi$'s dominant eigenvector scaled by its last component.
\end{mythe}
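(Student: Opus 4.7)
The plan is to convert the system of coupled equations from Lemma~\ref{lemma:Lem1} into a single eigenvalue problem and then identify the relevant eigenvalue/eigenvector pair via the Perron--Frobenius theorem.

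First, I would rewrite each equality in \eqref{eq:P2.4_equal} as $\tilde{\alpha}_k = Q\bigl(\sum_{i=k+1}^{K}\tilde{\alpha}_i + \tfrac{\sigma^{2}}{P|h_k|^{2}}\bigr)$. By the definitions of $\bD$ and $\bPsi$, the product $\bD\bPsi$ is strictly upper triangular with all unit entries above the diagonal, so $[\bD\bPsi\tilde{\balpha}]_k = \sum_{i>k}\tilde{\alpha}_i$, while $[\tfrac{\sigma^{2}}{P}\bD\textbf{1}]_k = \tfrac{\sigma^{2}}{P|h_k|^2}$. Thus the $K$ decoding equations collapse into the single vector identity $\tilde{\balpha} = Q\bigl(\bD\bPsi\,\tilde{\balpha} + \tfrac{\sigma^{2}}{P}\bD\textbf{1}\bigr)$.

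Next, to fold the power-normalization \eqref{eq:P2.4_sum} into the same linear system, I would augment the unknown to $\bx \triangleq [\tilde{\balpha}^T,\,1]^T$ and compute $\bPi\bx$ block by block. The top $K$ components reproduce $\tilde{\balpha}/Q$ by the previous step, and the last component equals $\textbf{1}^T\tilde{\balpha}/Q = 1/Q$ precisely because $\textbf{1}^T\tilde{\balpha} = 1$. Hence $\bPi\bx = (1/Q)\,\bx$, so $1/Q$ is an eigenvalue of $\bPi$ with eigenvector $\bx$.

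Then I would observe that $\bPi$ is entrywise non-negative and that $\bx$ is strictly positive: its last entry is $1$, the $k=K$ equality in \eqref{eq:P2.4_equal} gives $\tilde{\alpha}_K = Q\sigma^{2}/(P|h_K|^{2})>0$, and the remaining $\tilde{\alpha}_k$ are positive by downward induction. By the Perron--Frobenius theorem, a non-negative matrix admitting a strictly positive eigenvector must have that eigenvector aligned with its spectral radius. Therefore $1/Q = \lambda_{\max}(\bPi)$, giving $Q = 1/\lambda_{\max}(\bPi)$. The optimal $\tilde{\balpha}$ is recovered by taking a dominant eigenvector of $\bPi$, rescaling it so its last entry equals $1$, and reading off the first $K$ coordinates; the uniqueness statement in Lemma~\ref{lemma:Lem1} guarantees the procedure is well defined.

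The main obstacle is the Perron--Frobenius step: $\bPi$ is only non-negative (the $\bD\bPsi$ block has zeros on and below its diagonal) rather than strictly positive, so one must invoke the version of the theorem asserting that any strictly positive eigenvector of a non-negative matrix corresponds to its spectral radius. The rest is algebraic rearrangement.
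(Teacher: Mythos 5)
Your proposal is correct and follows essentially the same route as the paper: you recast \eqref{eq:P2.4_equal}--\eqref{eq:P2.4_sum} as the augmented eigensystem $\bPi\,[\tilde{\balpha};1]=\tfrac{1}{Q}[\tilde{\balpha};1]$ and then identify $\tfrac{1}{Q}$ with $\lambda_{\max}(\bPi)$ via Perron--Frobenius, exactly as in the paper's Appendix. If anything, your justification of the final step is slightly sharper than the paper's, since you explicitly invoke the fact that a nonnegative matrix with a strictly positive eigenvector (positivity of $\tilde{\balpha}$ following by downward induction) must have that eigenvector associated with its spectral radius, rather than merely noting that the spectral radius admits a nonnegative eigenvector.
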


\begin{proof}
See proof in Appendix \ref{app:theorem1}.
\end{proof}

\begin{remark}
Recall $h_k\triangleq \mathbf{g}_{K}^{H}\mathbf{\Theta}\mathbf{f}+v_{K}^{H}$, for given $\bTheta$, the optimal solution to Problem (P2.3) satisfies \eqref{eq:P2.4_equal}, thus the SINR of each user will be balanced at the same value at the optimal $\balpha$ and $\bTheta$. That is, for the case of single-antenna BS, the best rate fairness among users can be guaranteed.
\end{remark}

\subsection{Overall Algorithm}
The iterative algorithm for solving problem (P2) is summarized in Algorithm \ref{AlgorithmP1}, in each iteration of which the $\balpha$ and $\bTheta$ are alternatively optimized. The bisection search and the SDR technique are utilized to optimize $\bTheta$ from step 3 to step 23, and Theorem \ref{mythm:thm1} is used to compute the optimal $\balpha$ in step 24. The algorithm ends when the increase of the objective value is smaller than a smaller $\epsilon >0$. It can be shown that Algorithm \ref{AlgorithmP1} is guaranteed to converge, whose proof is similar to that of Algorithm \ref{AlgorithmP2} in Section \ref{solutionMul} and thus omitted herein.
\begin{algorithm}[t!]
\caption{Iterative algorithm for solving problem (P2)}\label{AlgorithmP1}
\begin{algorithmic}[1]
\STATE Initialize $\balpha^0$, $\bTheta^0$, $D$ (a large positive integer) and $\epsilon_{\sf b}$ (a smaller positive value). Let $n=0$.
\REPEAT
\STATE Given $Q_{\max}$, $Q_{\min}$, \WHILE{$Q_{\max}-Q_{\min}\geq\epsilon_{\sf b}$,}
\STATE  Solve the feasibility problem reduced from (P2.2) with given $Q=\frac{Q_{\max}+Q_{\min}}{2}$.
\IF {the feasibility problem reduced from (P2.2) is solvable,}
\STATE$Q_{\min}=Q$, update $\bE$.
\ELSE
\STATE   $Q_{\max}=Q$.
\ENDIF
\ENDWHILE
\RETURN $\bE^{n+1}=\bE$.
\STATE Compute the eigenvalue decomposition of $\bE^{n+1}$ as $\bE^{n+1}=\bU \bSigma \bU^{H}$.
\STATE Initialize $\calD=\emptyset$.
\FOR{$d=1,\ldots,D$}
\STATE Generate a random vector $\tilde{\be}_{d} =\bU \bSigma^{\frac{1}{2}}\br_{d}$, where $\br_{d} \sim \calC \calN(\bold{0}_{M+1}, \bI_{M+1})$.
\STATE Compute $\hat{\be}_{d}=e^{j \angle([\frac{\tilde{\be}_{d}}{\tilde{\be}_{d,M+1}}]_{(1:M)})}$, and then obtain $\bTheta_{d}=\diag(\hat{\be}_{d}^{H})$.
\IF {$\bTheta_{d}$ is feasible for problem (P2.1),}
\STATE $\calD=\calD \bigcup d$
\STATE Obtain the objective value of  (P2.1) as $Q_d$.
\ENDIF
\ENDFOR
\RETURN $\bTheta^{n+1} = \arg \underset{ d \in \calD} {\max} \; Q_d$.
\STATE For given $\bTheta^{n+1}$, use Theorem \ref{mythm:thm1} to compute the optimal power allocation $\balpha^{n+1}$.
\STATE  Update iteration index $n=n+1$.
\UNTIL{The increase of the objective value is smaller than $\epsilon$}.
\STATE  Return the optimal solution $\alpha^{\star}=\alpha^{n}$, $\Theta^{\star}= \Theta^{n}$, and $Q^{\star}$.
\end{algorithmic}
\end{algorithm}


In Algorithm \ref{AlgorithmP1}, the subproblems (P2.1) and (P2.3) are alteratively solved in each outer-layer BCD iteration, and the subproblem (P2.1) is solved by a bisection search inner-layer iteration in each of which an SDR problem (P2.2) is solved. Specifically, it needs to execute $\log_2(\frac{Q_{\sf max}-Q_{\sf min}}{\epsilon_b})$ iterations to achieve an accuracy $\epsilon_{\sf b}$ of bisection search over $Q$. Hence, from the complexity analysis of typical interior-point method like primal-dual path following
method~\cite{ma2010semidefinite}, the complexity of Algorithm \ref{AlgorithmP1} is obtained as $\mathcal{O}\big(I_{\sf ite}\log_2(\frac{Q_{\sf max}-Q_{\sf min}}{\epsilon_b})\max\{2K-1,M+1\}^4\left( M+1\right)^{\frac{1}{2}}\log{\frac{1}{\epsilon_{\sf s}}}\big)$, where $I_{\sf ite}$ denotes the number of outer-layer BCD iterations, and $\epsilon_{\sf s}$ denotes the predefined accuracy of the SDR solution.

\section{Optimal Solution For Multi-Antenna Base Station Case}\label{solutionMul}
In this section, we consider the general case of a multi-antenna BS. The algorithm for the special case of a single-antenna BS is generalized to solve problem (P1). The original problem is decoupled into two subproblems, which are described as the following subsections.

\subsection{ Phase Shift Optimization}\label{PhaseShiftOptimizationMul}
In each iteration $n$, for given beamforming vectors $\{\bomega_{k}\}$, the phase shifts $\bTheta$ can be optimized by solving the following problem
\begin{subequations}
\label{eq:P1.1}
\begin{align}
&\text{(P1.1):}\quad \underset{\bTheta,Q}{\max}  \quad  Q \\
&\text{s.t.} \quad \eqref{eq1:maxminthroughput}, \eqref{eq1:SICconstraint}, \eqref{eq1:decodingorderconstraint}, \eqref{eq1:Phase-shiftingmatrixconstraint}.
\end{align}
\end{subequations}

Recall $\mathbf{\Theta}=\diag \{e^{j\theta_{1}},\ldots,e^{j\theta_{M}}\}$, with $m=1,\ldots,M$. We denote $e_{m}=e^{j\theta_{m}}$, and $\be=[e_{1}, \ldots, e_{M}]^H$. By applying  $\bl_{k,t}=\diag(\mathbf{g}_{k}^{H})\mathbf{F}\bomega_{t}$ and $v_{k,t}=\bv_{k}^{H}\bomega_{t}$, the term $|(\mathbf{g}_{k}^{H}\mathbf{\Theta}\mathbf{F}+\bv_{k}^{H})\bomega_{t}|^{2}$ can be rewritten as $|\be^{H}\bl_{k,t}+v_{k,t}|^{2}$.
We further reduce this term to $\bar{\be}^{H}\bR_{k,t}\bar{\be}+|v_{k,t}|^{2}$, where $\bar{\be}=[\be;1]^H$ and
\begin{align}
\bR_{k,t}=\left[
\begin{array}{ccc}
 \bl_{k,t} \bl_{k,t}^{H}& \bl_{k,t}v_{k,t}^{H} \\
\bl_{k,t}^{H}v_{k,t} & 0
\end{array}
\right]
\end{align}



Following similar manipulations as in Section \ref{userorderingmul}, the term $\|\mathbf{g}_{k}^{H}\mathbf{\Theta}\mathbf{F}+\bv_{k}^{H}\|^{2}$ can be rewritten as $\bar{\be}^{H}\bS_{k}\bar{\be}+\|\bv_{k}^{H}\|^{2}$, where $\bS_k$ is given in \eqref{eq:Sj} with subscript replaced by k.

Note that $\bar{\be}^{H} \bR_{k,t}\bar{\be}=\trace (\bR_{k,t}\bar{\be}\bar{\be}^{H})$, $\bar{\be}^{H}\bS_{k}\bar{\be}=\trace(\bS_{k}\bar{\be}\bar{\be}^{H})$. We define the matrix $\bE=\bar{\be}\bar{\be}^{H}$, which needs to satisfy $\bE\succeq0$ and $\rank(\bE)=1$. Since the rank-one constraint is non-convex, we exploit the SDR technique to relax problem (P1.1) as follows
\begin{subequations}
\label{eq:P1.2}
\begin{align}
&\text{(P1.2):}\quad \underset{\mathbf{E},Q}{\max}  \quad  Q \\
&\text{s.t.}  \frac{\trace(\bR_{k,k}\mathbf{E}) \!+\! |v_{k,k}|^{2}}{\sum \limits_{i=k+1}^{K}(\trace(\bR_{k,i}\mathbf{E}) \!+\! |v_{k,i}|^{2})+\sigma^{2}} \geq Q , \forall k \label{eq1.2:maxminthroughput}\\
&\quad \quad \frac{\trace(\bR_{k,t}\mathbf{E}) \!+\! |v_{k,t}|^{2}}{\sum \limits_{i=t+1}^{K}(\trace(\bR_{k,i}\mathbf{E}) \!+\! |v_{k,i}|^{2})+\sigma^{2}} \geq Q, 1\leq t <k \leq K \label{eq1.2:SICconstriant}\\
&\quad \quad \trace(\bS_{K}\mathbf{E}) \!+\! \|\bv_{K}\|^{2} \geq \trace(\bS_{K-1}\mathbf{E}) \!+\! \|\bv_{K-1}\|^{2}  \geq\cdots\geq \trace(\bS_{1}\mathbf{E}) \!+\! \|\bv_{1}\|^{2}\label{eq2.2:decodingorderconstraint}\\
&\quad \quad \mathbf{E}\succeq 0 \label{eq1.2:powerallocationconstriant}\\
&\quad \quad  [\bE]_{m,m}=1. \label{eq1.2:sumpowerallocationconstriant}
\end{align}
\end{subequations}

Problem (P1.2) is still non-convex due to the non-convex constraint of \eqref{eq1.2:maxminthroughput} and \eqref{eq1.2:SICconstriant}. To tackle the coupled variables $Q$ and $\bTheta$, we use the bisection search method similar to \ref{PhaseShiftOptimizationSin}. Then the Gaussian randomization technique is applied to obtain a approximate solution.

\subsection{Beamforming Matrix Optimization}

In each iteration $n$, for given phase shifts $\bTheta^n$, the beamforming vectors $\{\bomega_{k}\}$ can be optimized by solving the problem 
\begin{subequations}
\label{eq:P1.3}
\begin{align}
&\text{(P1.3):}\quad \underset{\{\bomega_{k}\},Q}{\max} \quad  Q \\
&\text{s.t.} \quad \eqref{eq1:maxminthroughput}, \eqref{eq1:SICconstraint}, \eqref{eq1:sumpowerallocationconstraint}.
\end{align}
\end{subequations}

As (P1.3) is non-convex due to the non-convex constraints \eqref{eq1:maxminthroughput} and p\eqref{eq1:SICconstraint}, we adopt the SDR technique to obtain an efficient approximate solution. Denote the combined channel $\bh_{k}^{H}=\mathbf{g}_{k}^{H}\mathbf{\Theta}\mathbf{F}+\mathbf{v}_{k}^{H}$. By introducing $\bH_{k}=\bh_{k}\bh_{k}^{H}$ and $\bOmega_{k}=\bomega_{k}\bomega_{k}^{H}$, the term $|(\mathbf{g}_{k}^{H}\mathbf{\Theta}\mathbf{F}+\mathbf{v}_{k}^{H})\bomega_{k}|^{2}$ can be written as $\trace{(\bH_{k}\bOmega_{k})}$, where needs to satisfy $\bOmega_{k}\succeq0$ and $\rank{(\bOmega_{k})}=1$. Since the constraint $\rank{(\bOmega_{k})}=1$ is non-convex, we relax this constraint and problem (P1.3)
can be transformed as follows,
\begin{subequations}
\label{eq:P1.4}
\begin{align}
&\text{(P1.4):}\quad \underset{\{\bOmega_{k}\},Q}{\max}  \quad  Q \\
&\text{s.t.}  \frac{\trace(\bH_{k}\bOmega_{k})}{\sum \limits_{i=k+1}^{K}\trace(\bH_{k}\bOmega_{i}) \!+\sigma^{2}} \geq Q , \forall k \label{eq1.4:maxminthroughput}\\
&\quad \quad \frac{\trace(\bH_{k}\bOmega_{t})}{\sum \limits_{i=t+1}^{K}\trace(\bH_{k}\bOmega_{i}) \!+\sigma^{2}} \geq Q , 1\leq t <k \leq K \label{eq1.4:SICconstriant}\\
&\quad \quad \bOmega_{k}\succeq 0 , \forall k\label{eq1.4:powerallocationconstriant}\\
&\quad \quad \sum\limits_{k=1}^{K}\trace(\bOmega_{k})\leq P . \label{eq1.4:sumpowerallocationconstriant}
\end{align}
\end{subequations}
Since problem (P1.4) is non-convex, the method based on the bisection search and SDR technique similar to which described in \ref{PhaseShiftOptimizationSin} can be applied to solve the problem. Furthermore, we have the following theorem on the SDR solution to (P1.4).
\begin{mythe}
\label{mythm:thmSDR}
The rank of the SDR solution \{$\bOmega_k^{\star}$\} to (P1.4) with given $Q$ is upper bounded by two.
\end{mythe}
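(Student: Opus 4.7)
The plan is to establish the rank bound through the KKT optimality conditions of the SDR problem (P1.4) with $Q$ held fixed. First, I would reformulate the feasibility problem as the convex power-minimization SDP that minimizes $\sum_k \trace(\bOmega_k)$ subject to the SINR constraints \eqref{eq1.4:maxminthroughput}--\eqref{eq1.4:SICconstriant} together with $\bOmega_k \succeq 0$, and verify strict feasibility so that Slater's condition guarantees strong duality and makes KKT both necessary and sufficient for optimality.

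Next, introducing nonnegative dual scalars $\lambda_k$ for each own-rate constraint in \eqref{eq1.4:maxminthroughput}, $\mu_{k,t}$ for each SIC constraint in \eqref{eq1.4:SICconstriant}, and PSD dual matrices $\bY_k \succeq 0$ for the constraints $\bOmega_k \succeq 0$, I would write the Lagrangian and derive the stationarity condition $\partial \calL / \partial \bOmega_j = \bzero$ for each $j$. Collecting like terms, this rearranges to
\[ \bY_j = \bZ_j - \bM_j, \]
where $\bZ_j := \bI + Q \bC_j$ aggregates the objective-side identity together with all contributions in which $\bOmega_j$ enters a constraint as an interference term (coefficient $+Q$), so that $\bC_j \succeq 0$ and hence $\bZ_j \succ 0$ has full rank $N$; and $\bM_j := \lambda_j \bH_j + \sum_{k>j} \mu_{k,j} \bH_k$ gathers the signal-side rank-one pieces (coefficient $-1$), with $\bH_k = \bh_k \bh_k^H$ and $\bh_k^H = \bg_k^H \bTheta \bF + \bv_k^H$.

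Combining $\bZ_j \succ 0$ with $\bY_j = \bZ_j - \bM_j \succeq 0$ yields $\rank(\bY_j) \geq N - \rank(\bM_j)$, and complementary slackness $\bY_j \bOmega_j^{\star} = \bzero$ confines the range of $\bOmega_j^{\star}$ to the kernel of $\bY_j$, giving $\rank(\bOmega_j^{\star}) \leq \rank(\bM_j)$. The final, and hardest, step is to show $\rank(\bM_j) \leq 2$ uniformly in $j$ and in $K$: a direct count only delivers $\rank(\bM_j) \leq K - j + 1$ because $\bM_j$ is a sum of that many rank-one terms, which is trivial for $j \geq K-1$ but inadequate for smaller $j$. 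The refinement I would pursue is to exploit the min-max structure inherited from (P1): at the optimum of the power-minimization SDP, the active set among the constraints $\gamma_{j\to k}\geq Q$ (for $k\geq j$) collapses to at most two indices, because if three or more were simultaneously binding, a rank-one perturbation of $\bOmega_j^{\star}$ lying in the joint null space of the non-binding Hessian directions would preserve all active constraints while strictly decreasing $\trace(\bOmega_j^{\star})$, contradicting optimality. By complementary slackness all but at most two of $\{\lambda_j, \mu_{j+1,j}, \ldots, \mu_{K,j}\}$ then vanish, so $\bM_j$ reduces to a sum of at most two rank-one matrices, and the bound $\rank(\bOmega_j^{\star}) \leq 2$ follows.
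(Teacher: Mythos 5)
Your setup is sound up to the last step: forming the Lagrangian, using stationarity plus $\bY_j\bOmega_j^{\star}=\bzero$ to get $\rank(\bOmega_j^{\star})\le\rank(\bM_j)$ where $\bM_j$ collects the signal-side terms $\lambda_j\bH_j+\sum_{k>j}\mu_{k,j}\bH_k$, is essentially the same KKT machinery the paper uses (the paper keeps the power constraint and lets $\nu^{\star}\bI$ play the role of your full-rank $\bZ_j$). The genuine gap is your final claim that at most two of the constraints in which $\bOmega_j$ appears as the signal can be active, so that at most two of $\{\lambda_j,\mu_{j+1,j},\dots,\mu_{K,j}\}$ survive. This is not established and is not believable as stated: the constraints are \emph{linear} in $\bOmega_j$ (trace inner products), so there are no ``Hessian directions'' to perturb along, and with three or more active linear constraints plus the PSD cone there is no general argument that a trace-decreasing feasible perturbation exists. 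Indeed, at a max-min optimum one expects \emph{many} SINR constraints to be simultaneously tight (that is the whole point of SINR balancing, cf.\ Lemma~1 in the single-antenna case), so an active-set-cardinality bound of two is exactly the kind of statement that needs proof and is likely false in general. Without it, your chain only yields $\rank(\bOmega_j^{\star})\le K-j+1$, which you yourself note is inadequate.

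The paper closes this step by a completely different mechanism that does not restrict how many dual variables are nonzero: it exploits the IRS structure of the combined channels. Writing $\bh_k^{H}=\bar{\be}^{H}\bGamma_k$ with $\bGamma_k=[\diag\{\mathbf{g}_k^H\}\bF;\ \bv_k^H]$, every $\bH_k=\bGamma_k^{H}\bar{\be}\bar{\be}^{H}\bGamma_k$ shares the common rank-one core $\bar{\be}\bar{\be}^{H}$. Substituting this into the stationarity identity $\big(\sum_{k\ge l}\mu_{l,k}^{\star}\bH_k-Q\sum_{t<l}\sum_{k\ge t}\mu_{t,k}^{\star}\bH_k\big)\bOmega_l^{\star}=\nu^{\star}\bOmega_l^{\star}$, the paper argues that the aggregated signal-side and interference-side matrices each have rank at most one, so the left-hand factor has rank at most two and hence $\rank(\bOmega_l^{\star})\le 2$, uniformly in $K$. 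In short: the rank-two bound in the paper comes from the shared reflect-beamforming vector $\bar{\be}$ tying all the $\bH_k$'s together, not from a bound on the number of binding constraints; if you want to salvage your route you would either have to prove the active-set claim (which I do not see how to do) or import this channel-structure observation, at which point the dual-counting step becomes unnecessary.
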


\begin{proof}
See proof in Appendix \ref{app:theoremSDR}.
\end{proof}

When the SDR solution \{$\bOmega_k^{\star}$\} is rank-one, the optimal transmit beamforming vector can be obtained through Cholesky decomposition as $\bOmega_k^{\star}=\bomega_k^{\star} (\bomega_k^{\star})^H, \ \forall k$. If the SDR solution is rank-one, the randomization-based technique can be similarly adopted as in Section \ref{userorderingmul} to obtain a rank-one solution. Specifically, we firstly obtain the eigenvalue decomposition of $\bOmega_{k}^{n+1}$ as $\bOmega_{k}^{n+1}=\bU_{k}\bSigma_{k} \bU_{k}^{H}$, $k=1,\ldots,K$. Then $K$ random vectors is generated as follows
\begin{align}
\tilde{\bomega}_{k}=\bU_{k} \bSigma_{k}^{\frac{1}{2}}\br_{k},
\end{align}
where the random vector $\br_{k}\sim \mathcal{CN}(\mathbf{0},\mathbf{I}_{M+1})$. The objective value of problem (P1.4) is approximated as the maximum one achieved by the best $\{\tilde{\bomega_{k}}\}$.

\subsection{ Overall algorithm }\label{overallalgorithmP2P1}
The overall algorithm is summarized in Algorithm \ref{AlgorithmP2}. As shown, the algorithm optimizes $\{\bomega_{k}\}$ and $\bTheta$ alternatively in the out-layer iteration, and the SDR technique is adopted to obtain an approximate phase-shift solution and beamforming solution. The algorithm ends when the the incremental increase of the objective value is sufficiently small.
\begin{algorithm}[t!]
\caption{Iterative algorithm for solving problem (P1)}\label{AlgorithmP2}
\begin{algorithmic}[1]
\STATE Initialize \{$\bomega_k^0$\}, $\bTheta^0$, $D$ (a large positive integer) and $\epsilon$ (a smaller positive value). Let $n=0$.
\REPEAT
\STATE For given \{$\bomega_k^n$\}, obtain $\bTheta^{n+1}$ through similar procedure as described in steps 3 to 23 of Algorithm \ref{AlgorithmP1} with Problem (P2.2) replaced by Problem (P1.2).

\STATE For given $\bTheta^{n+1}$, apply the bisection search and the SDR techniques to solve Problem (P1.4), and obtain the optimal SDR solution as $\{\bOmega_{k}^{n+1}\}$, then apply Gaussian-randomization technique to obtain an approximate rank-one solution \{$\bomega_{k}^{n+1}$\}.

\STATE  Update iteration index $n=n+1$.
\UNTIL{The increase of the objective value is smaller than $\epsilon$}.
\STATE  Return the optimal solution $\bomega^{\star}$'s $=\bomega^{n}$'s , $\Theta^{\star}= \Theta^{n}$, and $Q^{\star}$.
\end{algorithmic}
\end{algorithm}

We prove the convergence of Algorithm \ref{AlgorithmP2} as follows.
\begin{mythe}
Algorithm \ref{AlgorithmP2} is guaranteed to converge.
\end{mythe}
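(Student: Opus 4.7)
The plan is to establish convergence via the standard monotone-bounded argument for block coordinate descent. I would prove the two ingredients separately: (i) the sequence of objective values $\{Q^n\}$ produced by Algorithm~\ref{AlgorithmP2} is monotonically non-decreasing; (ii) this sequence is bounded above. Together with the fact that $\{Q^n\}$ is a real sequence, these imply convergence to a finite limit.

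For monotonicity, I would examine the two updates within one outer iteration. In step~3, for fixed $\{\bomega_k^n\}$, the pair $(\bE=\bar{\be}^n(\bar{\be}^n)^H,\,Q^n)$ is feasible for the SDR problem (P1.2), so the optimal SDR objective value is at least $Q^n$ and the bisection search returns some $\bE^{n+1}$ achieving this. The only gap arises from the Gaussian randomization used to recover a rank-one $\bTheta^{n+1}$; to guarantee non-decrease, I would include $\bar{\be}^n$ itself as one of the randomization candidates (or equivalently, keep $\bTheta^n$ if no generated candidate beats it). Under this convention, the objective after step~3 is at least $Q^n$. In step~4, for fixed $\bTheta^{n+1}$, the rank-one matrices $\bOmega_k=\bomega_k^n(\bomega_k^n)^H$ are feasible for (P1.4), so again the SDR stage cannot decrease the objective; by Theorem~\ref{mythm:thmSDR} the returned SDR solution has rank at most two, which makes the subsequent randomization recovery of $\{\bomega_k^{n+1}\}$ particularly tight, and by the same ``retain-the-incumbent'' safeguard the objective after step~4 is no smaller than after step~3. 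Chaining these two inequalities gives $Q^{n+1}\ge Q^n$.

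For boundedness, I would observe that any feasible $(\{\bomega_k\},\bTheta,Q)$ of (P1) satisfies, using \eqref{eq1:maxminthroughput} for the weakest user (say $k=1$) and Cauchy--Schwarz together with the power constraint \eqref{eq1:sumpowerallocationconstraint},
\begin{equation}
Q \;\le\; \frac{|(\bg_1^H\bTheta\bF+\bv_1^H)\bomega_1|^2}{\sigma^2}\;\le\; \frac{\|\bg_1^H\bTheta\bF+\bv_1^H\|^2\,\|\bomega_1\|^2}{\sigma^2}\;\le\;\frac{(\|\bg_1\|\|\bF\|+\|\bv_1\|)^2\,P}{\sigma^2},
\end{equation}
which is a finite constant depending only on the channels and $P$. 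Hence $\{Q^n\}$ is bounded above. Combined with the monotonicity established above, the monotone convergence theorem yields the existence of a finite limit $Q^{\star}=\lim_{n\to\infty} Q^n$, completing the proof.

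The main obstacle is the rank-one recovery step: because Gaussian randomization is not guaranteed to improve the previous iterate, a naive BCD argument breaks down. The fix I propose---explicitly keeping the previous feasible point as a candidate in the randomization pool---is the technical point that needs to be stated carefully so that each subproblem produces a feasible solution whose objective is at least the incoming one. Once this is in place, the rest of the argument is routine.
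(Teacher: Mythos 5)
Your proof follows the same monotone-bounded BCD argument as the paper: show the objective is non-decreasing across the two block updates, bound it above, and invoke monotone convergence. The difference is in rigor at two points where the paper glosses over. The paper simply asserts that step~3 yields ``the optimal $\bTheta^{n+1}$'' for fixed $\{\bomega_k^n\}$ and step~4 the optimal beamformers for fixed $\bTheta^{n+1}$, from which $Q(\bTheta^{n},\bomega_k^{n}) \le Q(\bTheta^{n+1},\bomega_k^{n}) \le Q(\bTheta^{n+1},\bomega_k^{n+1})$ follows immediately; but as you correctly note, the SDR-plus-Gaussian-randomization recovery does not actually guarantee a rank-one solution at least as good as the incumbent, so the paper's per-block inequalities are not literally justified by the algorithm as written. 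Your ``retain-the-incumbent'' safeguard (keeping $\bTheta^n$, respectively $\{\bomega_k^n\}$, in the candidate pool) is exactly the modification needed to make the monotonicity chain airtight, and your observation that the previous iterate is feasible for the relaxed problems (P1.2) and (P1.4) is the right way to set it up. For boundedness, the paper appeals to continuity of the objective over the compact feasible set, while you give an explicit Cauchy--Schwarz bound $Q \le (\|\bg_1\|\,\|\bF\|+\|\bv_1\|)^2 P/\sigma^2$; both are valid, yours being more concrete. One minor caveat: the appeal to Theorem~\ref{mythm:thmSDR} (rank at most two) does not by itself make the randomization ``tight,'' so the monotonicity in step~4 should rest entirely on the incumbent safeguard, as you indeed arrange. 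Overall your argument is correct and, at the randomization step, more careful than the paper's own proof.
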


\begin{proof}
First, in step 3 of Algorithm \ref{AlgorithmP2}, since the optimal solution $\bTheta^{n+1}$ is obtained for given \{$\bomega_k^{n}$\}, we have the following inequality on the minimum rate 
\begin{align}
    Q(\bTheta^{n},\bomega_k^{n}) \leq Q(\bTheta^{n+1},\bomega_k^{n}). \label{eq:Qinequality1}
\end{align}

Second, in step 4 of Algorithm \ref{AlgorithmP2}, since \{$\bomega^{n}$\} is the optimal solution to Problem (P1.3), the following inequality holds
\begin{align}
  Q(\bTheta^{n+1},\bomega_k^{n}) \leq Q(\bTheta^{n+1},\bomega_k^{n+1}). \label{eq:Qinequality2}
\end{align}

From \eqref{eq:Qinequality1} and \eqref{eq:Qinequality2}, we further have
\begin{align}\label{eq:Qinequality32}
  Q(\bTheta^{n},\bomega_k^{n}) \leq Q(\bTheta^{n+1},\bomega_k^{n+1}).
\end{align}
The inequality in \eqref{eq:Qinequality32} indicates that the objective value of problem (P1) is always non-decreasing after each iteration. On the other hand, since the objective is continuous over the compact feasible set of problem (P1), it is upper-bounded by some finite positive number \cite{CVXBoyd04}. Hence, the proposed Algorithm \ref{AlgorithmP2} is guaranteed to converge, which completes the proof. 
\end{proof}

Notice that no global optimality can be assured for Algorithm \ref{AlgorithmP2}. The reasons are two fold. First, the problem (P1) is not jointly convex with respect to $\bTheta$, \{$\bomega_k$\} and $Q$. Second, the adopted method of SDR followed by Gaussian randomization for solving  sub-problem (P1.2) and (P1.4) does not guarantee the global optimality of solution.


In Algorithm \ref{AlgorithmP2}, the subproblems (P1.2) and (P1.4) are alteratively solved in each outer-layer BCD iteration, and each subproblem is solved by a bisection search inner-layer iteration in each of which an SDR problem is solved. Similar to the complexity analysis for Algorithm \ref{AlgorithmP1}, the complexity of  Algorithm \ref{AlgorithmP2} is obtained from~\cite{ma2010semidefinite} as $\mathcal{O}\big(I_{\sf ite}\big(\log_2(\frac{Q_{\max}-Q_{\min}}{\epsilon_{\sf b}})\max\{\frac{K^2+3K-2}{2},M+1\}^4\left( M+1\right)^{\frac{1}{2}}\log{\frac{1}{\epsilon_{\sf s}}}+\log_2\left(\frac{Q_{\max}-Q_{\min}}{\epsilon_{\sf b}}\right) \max\{\frac{(K+1)K}{2},N\}^4 N^{\frac{1}{2}}\log{\frac{1}{\epsilon_{\sf s}}}\big)\big)$.

\section{NUMERICAL RESULTS}\label{solution}\label{simulation}
Numerical results are provided in this section. Under a three-dimensional (3D) coordinate ($x,y,z$) system, the horizontal projection $(x,y)$ of which is illustrated in Fig. \ref{fig:FigSimlocation}. Assume that both the BS and the IRS are located at the altitude of 10 meter (m), and the locations of the BS and the IRS are set as (0, 0, 10) and ($25\sqrt{2}$, $25\sqrt{2}$, 10), respectively; while the users, at the altitude of 1.5m, are randomly and uniformly distributed in the rectangular area of Fig. \ref{fig:FigSimlocation}. We assume that the BS is equipped with a uniform linear array (ULA) with antenna spacing $d_{\sf B}=\lambda/2$, where $\lambda$ denotes the wavelength; while the IRS is equipped with a uniform rectangular array (URA) with IRS element spacing $d_{\sf I}=\lambda/8$. The operating frequency is assume as $2.5$ GHz. The LoS components $\bar{\bF}$ and $\bg_k$ are modeled by the steering vectors depending on the angle of arrival and angle of departure of particular LoS pathes \cite{LIS_quantization}.


\begin{figure}[!t]
	\centering	\includegraphics[width=.7\columnwidth]{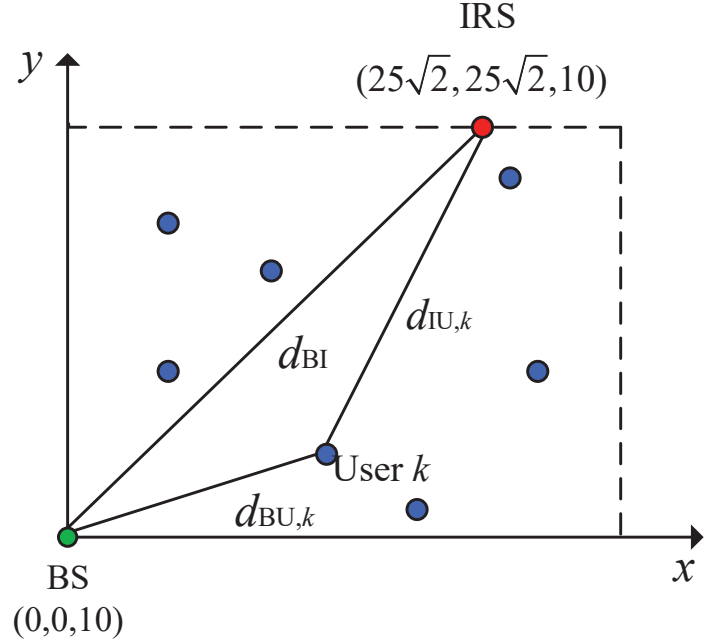}
	\caption{Horizontal locations of an IRS-assisted NOMA system.} \label{fig:FigSimlocation}
\vspace{-0.4cm}
\end{figure}
\vspace{-0.1cm}

As in \cite{QwuIRS}, we assume that the BS-to-User channels are Rayleigh fading and the large-scale pathloss is $10^{-3}d^{-4}$, where $d$ is the distance with unit of meter. Both the BS-to-IRS channel and the IRS-to-User channels are assumed to be Rician fading, and their pathloss are $10^{-3}d^{-2}$ and $10^{-3}d^{-2.5}$, respectively. We set the Rician factors $K_{1}=K_{2}=10$. As in \cite{ZhangLiangLiJSAC} and \cite{ULNOMAYuanCL18}, we set $\sigma^{2}=-114$ dBm. Let $D=400$ and $\epsilon=0.01$.

For communication performance comparison, we consider three benchmarks, i.e., traditional NOMA (without IRS but with optimized power allocation at the BS), IRS-assisted OMA (with optimized power allocation, phase shift and domain of freedom), and the traditional OMA  (without IRS but with optimized power allocation and domain of freedom). The achievable rate of each user for OMA is expressed as (6) in \cite{OptiNOMAOMA}. Simulation results are based on 1000 random channel realizations.


\subsection{Rate Performance Evaluation for Proposed Solution}\label{resultforsrateperformance}
In this subsection, the numerical results for rate performance of the proposed IRS-assisted NOMA and three benchmarks are analyzed. We consider the case of two users, i.e., $K=2$. The coordinate of the two users are randomly generated as (32.52,23.48,1.5) and (48.45,19.55,1.5), respectively.
\begin{figure}[!t]
	\centering	\includegraphics[width=.7\columnwidth]{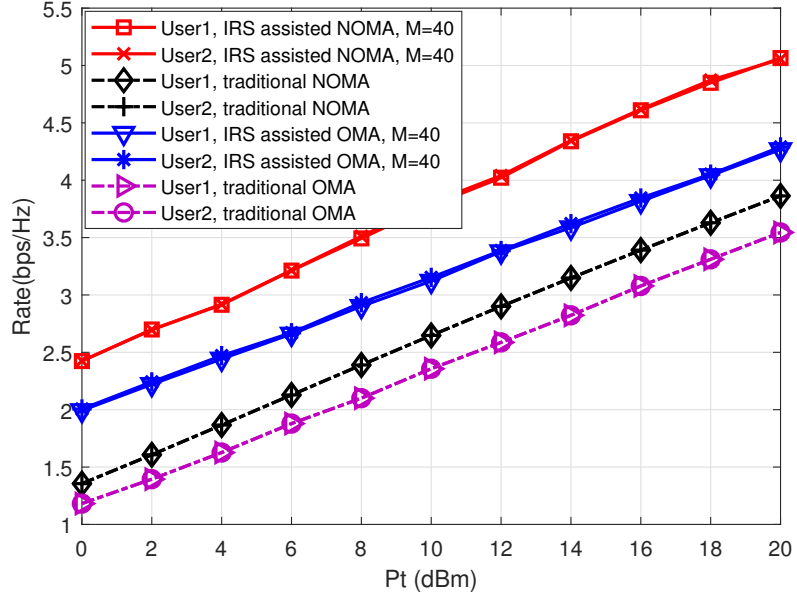}
	\caption{Rate comparison with different benchmarks for single-antenna BS case.} \label{fig:FigSim1}
\vspace{-0.4cm}
\end{figure}

First, Fig.~\ref{fig:FigSim1} plots the per-user rate performance versus the BS's transmission power $P$ for the proposed IRS-assisted NOMA and the three benchmarks, under the single-antenna BS setup (i.e., $N=1$). In general,  the proposed IRS-assisted NOMA achiepves significant rate gains compared to the benchmarks. Specifically, for the case of $P=10$ dBm, the proposed IRS-assisted NOMA improves the rate performance by 53.2\%, 38.5\%, and 14.3\%, compared to the benchmarks of traditional OMA without IRS, traditional NOMA without IRS, and IRS-assisted OMA, respectively. Compared to the traditional NOMA, the rate gain achieved by IRS-assisted NOMA, comes from the enhanced combined-channel strength and larger channel-strength differences introduced by the IRS. The additional rate gain of IRS-assisted NOMA compared to IRS-assisted OMA is due to the higher spectral efficiency of NOMA relative to OMA. Notice the superiority of NOMA compared to OMA still remains after introducing the IRS, as long as the difference of the combined channel strength is large enough. Also, the user 1 and user 2 achieve almost the same rate under each scheme, achieving best user fairness. The practical significance of this proposed IRS-assisted NOMA lies in that it enables the NOMA system to achieve higher rate and larger coverage than traditional NOMA and OMA systems while ensuring user fairness.

\begin{figure} [!t]
	\centering	\includegraphics[width=0.7\columnwidth]{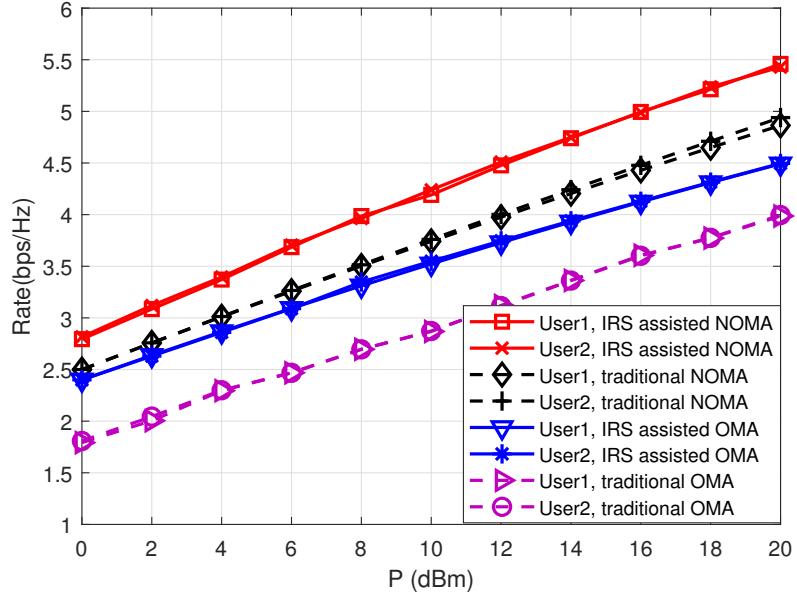}
	\caption{Rate comparison with different benchmarks for multi-antenna BS case.} \label{fig:FigSim2}
\vspace{-0.4cm}
\end{figure}


 Then, Fig.~\ref{fig:FigSim2} plots the per-user rate versus the BS's transmission power $P$ for the proposed IRS-assisted NOMA and the three benchmarks, under the multi-antenna BS setup. We set the number of BS antennas as $N=4$. Similar to the single-antenna setup, it is observed that the IRS-assisted NOMA achieves significant rate gains compared to the three benchmarks, which verifies the enhanced spectrum efficiency of NOMA and the benefits of the application of the IRS to the downlink  MISO-NOMA systems.


%

Moreover, Fig.~\ref{fig:FigSim3} compares the max-min rate performance of the proposed CCS-based user ordering scheme with that of the exhaustive search scheme. It is observed that the rate of the CCS-based user ordering scheme achieves almost the same performance as the exhaustive search scheme which needs to search all $K!$ possible user orderings, for both cases of a single-antenna BS and a multiple-antenna BS. This numerically verifies the performance advantage of the proposed CCS-based ordering scheme, besides it low-complexity feature.
\begin{figure}[!t]
	\centering	\includegraphics[width=0.7\columnwidth]{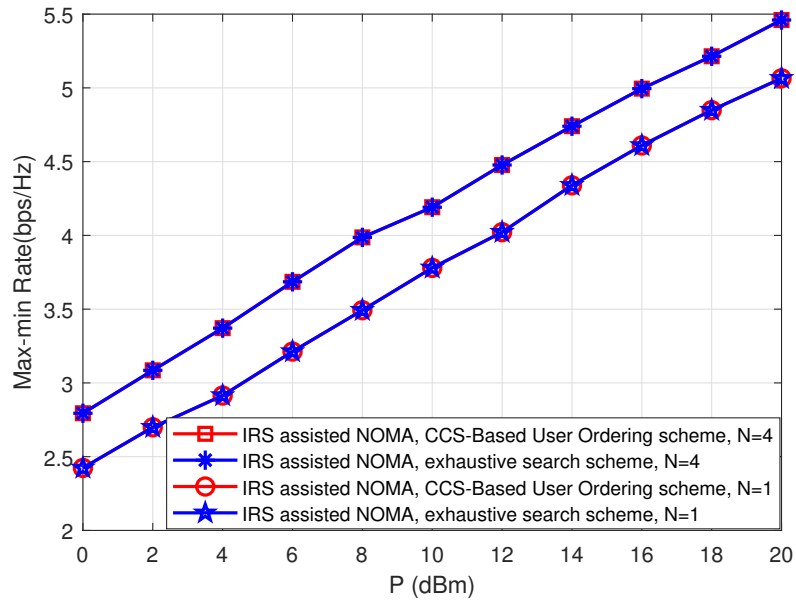}
	\caption{Max-min rate comparison with exhaustive-search user ordering scheme.}  \label{fig:FigSim3}
\vspace{-0.4cm}
\end{figure}
\begin{figure}[htbp]
	\centering \includegraphics[width=.7\columnwidth]{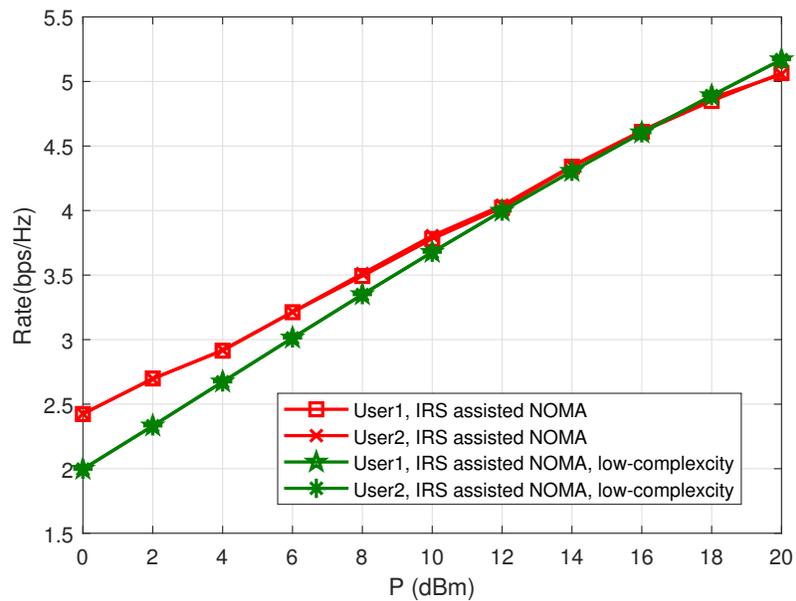}
	\caption{Max-min rate comparison with low-complexity solving scheme.}\label{fig:FigSim4} 
\vspace{-0.4cm}
\end{figure}

Also, Fig. \ref{fig:FigSim4} compares the performance of the low-complexity solution of IRS phase shifts $\Theta$ for two-user NOMA systems. When the transmission power $P$ at the BS, the low-complexity solution in closed form suffers from slight rate performance degradation compared to the solution achieved by the general algorithm. However, when $P$ is higher than 16 dBm, the low-complexity solution even outperforms the solution achieved by the general algorithm. This is because that for small or moderate $P$, the low-complexity solution maximizes the combined channel of the stronger user, without strengthening the weaker user's combined channel which may result into relatively low rate for the weaker user; while for large $P$, the low-complexity solution in closed-form is almost optimal, as proved in Proposition 1, outperforming than than the general solution. Therefore, for the two-user NOMA systems, the low-complexity solution is an efficient approach to determine the phase shift of the IRS, with reward of significant complexity reduction.


\subsection{Effects of IRS's Finite-Phase Resolution on Rate Performance}\label{resultforfinitephase}
In practical systems, the IRS structure has finite phase resolution and the implemented phase shifts depend on the number of quantization bits denoted as $B$. We numerically verify the effect of IRS's finite phase resolution on the rate performance. Each optimized continuous phase shift $\theta_m$ is quantized to its nearest discrete value in the set $\{0, \frac{2 \pi }{2^B}, \dots,\frac{2 \pi (2^B-1)}{2^B}\}$. Fig. \ref{fig:FigSim7} plots the max-min rate versus phase-quantization bits $B$'s under different transmission power for multi-antenna BS case. It is observed that the IRS's finite phase resolution in general degrades the max-min rate compared to the ideal case of IRS with infinite phase resolution, but the rate performance degradation becomes negligible very quickly as $B$ increases.


\begin{figure} [!t]
	\centering	\includegraphics[width=.7\columnwidth]{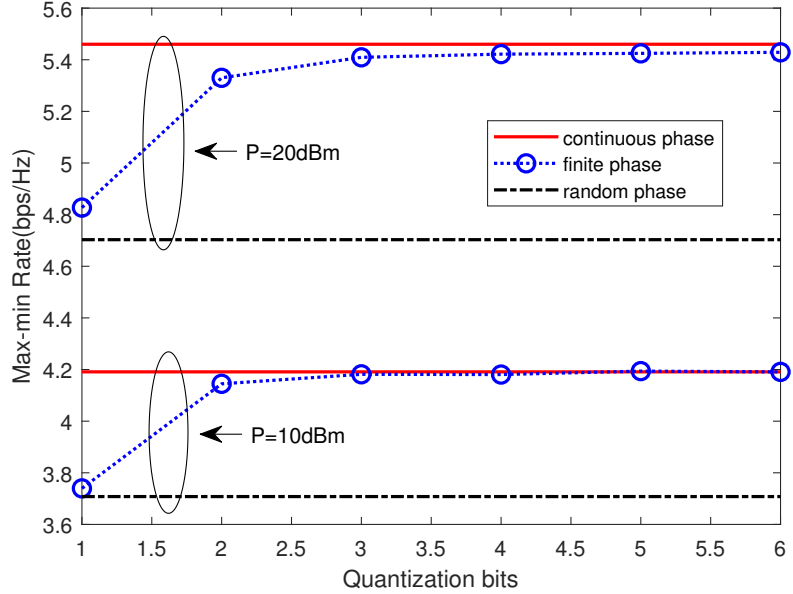}
	\caption{Max-min Rate comparison with different phase resolutions.}\label{fig:FigSim7}
\end{figure}


\subsection{Effect of Parameters $M$ and $K$ On Rate Performance}\label{resultforlowcomplexity}
In this subsection, the effects of main parameters on the rate performance are investigated. First, Fig.~\ref{fig:FigSim8} plots the max-min rate versus the number of reflecting elements $M$, for the transmission power of $0$dBm, $10$dBm and $20$dBm, respectively. As expected, for both transmission power, and the max-min rate increases with $M$. Then,

\begin{figure} [!t]
	\centering	\includegraphics[width=.7\columnwidth]{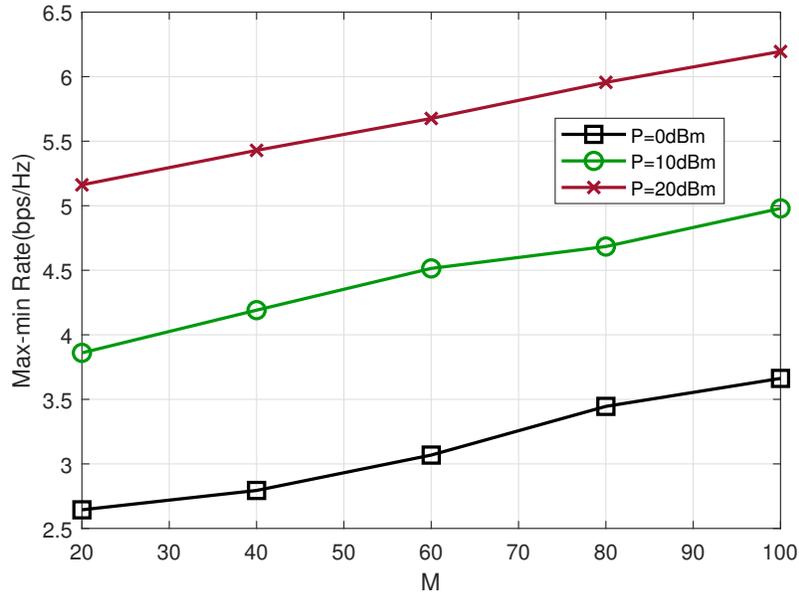}
	\caption{Max-min Rate versus number of reflecting elements $M$.}\label{fig:FigSim8} 
\vspace{-0.4cm}
\end{figure}

Fig.~\ref{fig:FigSim9} plots the rate performance versus the number of NOMA users $K$. It is observed from each curve in subfigure (a) that as $K$ increases, the sum rate increases first and then decreases, and achieves the maximum value for $K=8$. This reveals that there is a tradeoff between the number of NOMA users and the achievable sum rate. The higher sum rate can be obtained for larger number of reflecting elements $M$ or higher transmission power. From subfigure (b), we observe that for the two-user NOMA scenario, the users have almost the same rate, achieving the best fairness, and good rate fairness can always be guaranteed as the number of NOMA users $K$ increases. 



  \begin{figure}[h]
    \begin{subfigure}[b]{0.5\textwidth}
      \includegraphics[width=\textwidth]{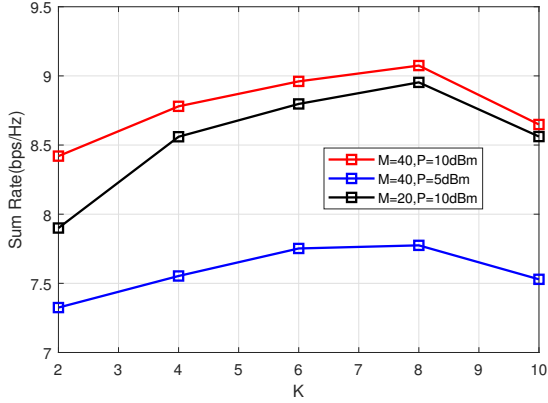}
      \caption{Sum Rate versus the number of users $K$}
    \end{subfigure}
    \begin{subfigure}[b]{0.5\textwidth}
      \includegraphics[width=\textwidth]{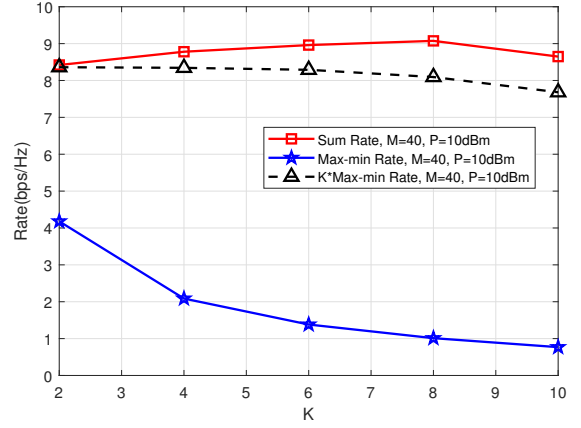}
      \caption{Max-min Rate versus the number of users $K$ }
    \end{subfigure}
    \caption{Rate performance versus number of users $K$}\label{fig:FigSim9}
  \end{figure}

\section{CONCLUSIONS}\label{Conclusions}\label{conclusion}
This paper has investigated the problem of rate optimization for an IRS-assisted downlink NOMA system. The minimum SINR (i.e., equivalently the rate) of all users are maximized by jointly optimizing the BS's transmit beamforming and the IRS's phase shifts. Efficient algorithms are proposed to find suboptimal solutions to the formulated non-convex problem, by leveraging the block coordinated decent and semidefinite relaxation techniques. Numerical results show that the IRS-assisted downlink NOMA system can enhance the rate performance significantly, compared to traditional NOMA without IRS and traditional orthogonal multiple access with/without IRS, and practical IRS with low phase resolution can approximate the best-achievable rate performance achieved by continuous phase shifts. Other interesting future work for IRS-assisted NOMA includes the outage performance analysis, rate performance under imperfect CSI, etc.
\begin{appendix}

\subsection{Proof of Proposition \ref{Proposition:Prop1}}\label{app:prop1}
\begin{proof}
Define the combined channel of the $k$-th user as $h_k\triangleq \mathbf{g}_{k}^{H}\mathbf{\Theta}\mathbf{f}+v_{k}^{H}$. Then we have
\begin{align}
\gamma_{1 \rightarrow 1}&=\frac{\alpha_1 P |h_1|^2}{\alpha_2 P |h_1|^2 +\sigma^{2} } \label{gamma11} \\
\gamma_{2 \rightarrow 2}&=\frac{\alpha_2 P |h_2|^2}{\sigma^{2} }. \label{gamma22}
\end{align}

For sufficiently high transmission power $P$ at the BS such that $\alpha_2 P |h_1|^2 \gg \sigma^{2}$, we have $\gamma_{1 \rightarrow 1} \approx \frac{\alpha_1}{\alpha_2}$. Given $\alpha_1$ and $\alpha_2$, $\gamma_{1 \rightarrow 1}$ is determined, while $\gamma_{2 \rightarrow 2}$ is monotonically increasing with respect to $|h_2|^2$. To maximize the minimum value between $\gamma_{1 \rightarrow 1}$ and $\gamma_{2 \rightarrow 2}$, it suffice to maximize the 2-nd user's combined channel strength $|h_2|^2=|\mathbf{g}_{2}^{H}\mathbf{\Theta}\mathbf{f}+v_{2}|^{2}$ by optimizing the phase shifts $\bTheta$.

Specifically, the combined channel strength $|h_2|^2$ can be rewritten as $|\be^{H} \diag(\mathbf{g}_{2}^{H})\mathbf{f} +v_{2}|^{2}$ by some variable substitutions, which can be further expressed as $\big|\sum_{i=1}^M \left( |[\mathbf{g}_{2}^{H}]_i|  |[\mathbf{f}]_{i}| e^{j(\theta_{i}+\varphi_{2,i}+\psi_{i})}\right)+ |v_{2}| e^{j\xi_2}\big|^2$. Hence, it is standard to show that the optimal phase shifts that maximize $|h_2|^2$ are given by $\theta_{i}=\xi_2-\varphi_{2,i}-\psi_{i}$, for $i=1,\ldots, M$.
\end{proof}

\subsection{Proof of Lemma \ref{lemma:Lem1}}\label{app:lemma1}
Let $\balpha^{*}$ denote the power allocation vector satisfying \eqref{eq:P2.4_equal} \eqref{eq:P2.4_sum} and $\gamma^{*}$ denote the obtained equal SINR. The optimality and uniqueness of $\balpha^{*}$ are proved in the sequel.

First, we prove $\balpha^{*}$ is the optimal solution to (P2.1). Assume $\balpha^{**}\neq \balpha^{*}$  with corresponding max-min SINR $\gamma^{**}$, and $\gamma^{**}>\gamma^{*}$. Due to the constraint \eqref{eq2:sumpowerallocationconstraint}, there must be an element of $\balpha^{**}$ is smaller than that of $\balpha^{*}$.  As proved in the next paragraph, if any element of $\balpha^{*}$ decrease, the max-min SINR would be smaller than $\gamma^{*}$, which contradicts with the previous assumption. Hence, $\balpha^{*}$ and $\gamma^{*}$ are the optimal solution and the optimal objective value, respectively.

For the $K$-th user, by using the fact that $\gamma_{K}$ is a monotonically increasing function of $\alpha_{K}$, the conclusion is clear. Therefore, in order to ensure $\gamma_{K} \geq \gamma^{*}$, $\alpha_{K}$ cannot be reduced. For the $K-1$ -th user, $\gamma_{K-1}=\frac{\alpha_{K-1} P|h_{K-1}|^{2}}{\alpha_{K} P|h_{K-1}|^{2}+\sigma^{2}}$, due to non-decreasing $\alpha_{K}$, the reduction of $\alpha_{K-1}$ will directly result into smaller $\gamma_{K-1}$ and thus $\alpha_{K-1}$ cannot be reduced either. The remaining $\gamma_{K-2}, \gamma_{K-3}, \ldots$, and $\gamma_{1}$ can be sequentially analyzed in the same manner.

Second, we prove that $\balpha^{*}$ is the unique solution to the equations \eqref{eq:P2.4_equal} and \eqref{eq:P2.4_sum}. From \eqref{eq:P2.4_equal}, we have the following recursive equations $\alpha_{K} (Q) =\frac{Q \sigma^2}{P|h_{k}|^{2}}$, and $\alpha_{k} (Q) =\frac{Q}{P|h_{k}|^{2}}\left({\sum \nolimits_{i=k+1}^{K}\alpha_{i} P|h_{k}|^{2}+\sigma^{2}}\right)$, for $k=K-1, K-2, \ldots, 1$. It can be easily shown that each $\alpha_{k} (Q)$ is strictly and monotonically increasing with $Q$. Therefore, $\sum \nolimits_{k=1}^{K}\alpha_{k}$ also strictly monotonically increases as $Q$ increases. Thus, there exist a unique positive value $Q^*$ which satisfies \eqref{eq:P2.4_equal} and \eqref{eq:P2.4_sum}. The optimal power allocation $\balpha^{*}$ are thus unique, and obtained as $\alpha_{k}^{*} (Q^*)$ for each $k$, which completes the proof.

\subsection{Proof of Theorem \ref{mythm:thm1}}{\label{app:theorem1}}
 Similarly to Subsection II-B in \cite{schubert2004solution}, the equations in \eqref{eq:P2.4_equal} can be written as
\begin{align}
\tilde{\balpha}\frac{1}{Q}=\bD \bPsi \tilde{\balpha}+\frac{\sigma^2}{P}\bD \textbf{1}.\label{eq:AppB1}
\end{align}
Multiplexing both sides by $\textbf{1}^T$ yields
\begin{align}
\frac{1}{Q}=\textbf{1}^T \bD \bPsi \tilde{\balpha}+\textbf{1}^T \frac{\sigma^2}{P}\bD \textbf{1}.\label{eq:AppB2}
\end{align}
 Define $ \tilde{\balpha}_{\sf ext}= [ \tilde{\balpha}; 1]$. From \eqref{eq:AppB1} and \eqref{eq:AppB2},
an eigenvalue system can be constructed as
 \begin{align}
\bPi \tilde{\balpha}_{\sf ext}= \frac{1}{Q} \tilde{\balpha}_{\sf ext},
\end{align}
where $Q$ is a reciprocal eigenvalue of the nonnegative matrix $\bPi$.

It is obvious that $Q > 0, \tilde{\balpha}_{\sf ext} \geq 0$ must be satisfied to represent physical meaning.
 According to Perron-Frobenius theory, for any nonnegative real matrix $\bB_{K\times K} \geq 0 $, whose spectral radius is $\rho(B)$, there exists a vector $\by \geq 0$ such that $\bB \by= \rho(B)\by$, thus the maximal eigenvalue $\rho(B)$ and the corresponding eigenvector are always nonnegative. Therefore, the optimal solution of the problem (P2.3) is given by
  \begin{align}
Q=\frac{1}{\lambda_{\max}(\bPi)}.
\end{align}
 And the optimal power allocation vector $\balpha$ is given by the first $K$ components of the dominant eigenvector of $\bPi$, which can be scaled such that its last component equals 1.

 \subsection{Proof of Theorem \ref{mythm:thmSDR}}{\label{app:theoremSDR}}
  Let $\mu_{t,k} \geq 0$ and $\nu \geq 0$ be the dual variables corresponding to the constraints given in \eqref{eq1.4:maxminthroughput}, \eqref{eq1.4:SICconstriant}, and \eqref{eq1.4:sumpowerallocationconstriant}, respectively, where $1\leq t \leq k \leq K$. Let $\bS_{k}\succeq 0$ be the dual variable corresponding constraints $\bOmega_{k} \succeq 0$ in \eqref{eq1.4:powerallocationconstriant}. The Lagrangian of (P1.4) is then written as
 \begin{align}
L(\{\bOmega_{k}\}, \mu_{t,k}, \nu, \{\bS_{k}\})=&-\sum_{t=1}^{K}\sum_{k=t}^{K}{\mu_{t,k} \left[\trace{(\bH_k \bOmega_t)}-Q\sum_{i=t+1}^{K}{\trace{(\bH_k \bOmega_i)}}-Q\sigma^2\right]}\nonumber\\
&+\nu \left[\sum_{k=1}^{K}{\trace{(\bOmega_k)}}-P\right]-\sum_{k=1}^{K}{\trace{(\bS_k\bOmega_k)}}.
\end{align}
Let $ \{\bOmega_{k}^{*}\}$, $\mu_{t,k}^{*}$, $\nu^{*}$ and $\{\bS_{k}^{*}\}$ be the optimal primal and dual variables, respectively. Since (P1.4) is convex for given $Q$ and satisfies the Slater¡¯s condition, the strong duality holds for this problem. As a result, the optimal primal and dual solutions should satisfy the Karush-Kuhn-Tucker (KKT) conditions given by
 \begin{align}
\nabla_{\bOmega_{l}} L(\{\bOmega_{k}^{*}\}, \mu_{t,k}^{*}, \nu^{*}, \{\bS_{k}^{*}\})=-\sum_{k=l}^{K}{\mu_{l,k}^{*}\bH_k}+Q\sum_{t=1}^{l-1}\sum_{k=t}^{K}{\mu_{t,k}^{*}\bH_k }+\nu^{*} \bI - \bS_{l}^{*}=0,\label{KKT1}\\
\bS_{l}^{*}\bOmega_{l}^{*}=0.\label{KKT2}
\end{align}
By multiplying \eqref{KKT1} by $\bOmega_{l}^{*}$ on both sides and substituting \eqref{KKT2} into the obtained equation, we have
 \begin{align}
\sum_{k=l}^{K}{\mu_{l,k}^{*}\bH_k}\bOmega_{l}^{*}-Q\sum_{t=1}^{l-1}\sum_{k=t}^{K}{\mu_{t,k}^{*}\bH_k }\bOmega_{l}^{*}=\nu^{*} \bOmega_{l}^{*}.
\end{align}
Recall $\bH_k=\bh_k^{H} \bh_k$ and $\bh_k= \mathbf{g}_{k}^{H}\mathbf{\Theta}\mathbf{F}+\bv_{k}^{H}=\be^{H}\diag\{\mathbf{g}_{k}^{H}\}\mathbf{F}+\bv_{k}^{H}$, by introducing
\begin{align}
\bGamma_k =\left[
\begin{array}{ccc}
\diag\{\mathbf{g}_{k}^{H}\}\mathbf{F} \\
\bv_{k}^{H}
\end{array}
\right],
\end{align}
the $\bh_k$ can be rewritten as $\bar{\be}^{H}\bGamma_k $.
Thus, we have
 \begin{align}
\left(\sum_{k=l}^{K}{\mu_{l,k}^{*}\bGamma_k^{H}\bar{\be}\bar{\be}^{H}\bGamma_k}-Q\sum_{t=1}^{l-1}\sum_{k=t}^{K}{\mu_{t,k}^{*}\bGamma_k^{H}\bar{\be}\bar{\be}^{H}\bGamma_k }\right)\bOmega_{l}^{*}=\nu^{*} \bOmega_{l}^{*}.
\end{align}
Since $\sum_{k=l}^{K}{\mu_{l,k}^{*}\bGamma_k^{H}\bar{\be}\bar{\be}^{H}\bGamma_k}=\bZ_k\bar{\be}\bar{\be}^{H}\bZ_k^{H}$, where $\bZ_k=\sum_{k=l}^{K}{\mu_{l,k}^{*}\bGamma_k^{H}}$, thus we have
\begin{align}
\rank \left(\sum_{k=l}^{K}{\mu_{l,k}^{*}\bGamma_k^{H}\bar{\be}\bar{\be}^{H}\bGamma_k}\right)\leq \rank \left(\bar{\be}\bar{\be}^{H}\right)=1.
 \end{align}
Similarly, $\rank \left(\sum_{t=1}^{l-1}\sum_{k=t}^{K}{\mu_{t,k}^{*}\bGamma_k^{H}\bar{\be}\bar{\be}^{H}\bGamma_k }\right)\leq1$.
The following derivations complete the proof
 \begin{align}
\rank({\bOmega_{l}^{*}})=&\rank{ \left(\sum_{k=l}^{K}{\mu_{l,k}^{*}\bGamma_k^{H}\bar{\be}\bar{\be}^{H}\bGamma_k}-Q\sum_{t=1}^{l-1}\sum_{k=t}^{K}{\mu_{t,k}^{*}\bGamma_k^{H}\bar{\be}\bar{\be}^{H}\bGamma_k }\right)\bOmega_{l}^{*}} \nonumber \\
&\leq\rank \left(\sum_{k=l}^{K}{\mu_{l,k}^{*}\bGamma_k^{H}\bar{\be}\bar{\be}^{H}\bGamma_k}\right)+\rank \left(\sum_{t=1}^{l-1}\sum_{k=t}^{K}{\mu_{t,k}^{*}\bGamma_k^{H}\bar{\be}\bar{\be}^{H}\bGamma_k }\right)
\leq2.
\end{align}

\end{appendix}
\renewcommand{\baselinestretch}{1.26}
\bibliographystyle{IEEEtran}
\bibliography{IEEEabrv,reference20191211NOMAIRS}

\begin{thebibliography}{10}
\providecommand{\url}[1]{#1}
\csname url@samestyle\endcsname
\providecommand{\newblock}{\relax}
\providecommand{\bibinfo}[2]{#2}
\providecommand{\BIBentrySTDinterwordspacing}{\spaceskip=0pt\relax}
\providecommand{\BIBentryALTinterwordstretchfactor}{4}
\providecommand{\BIBentryALTinterwordspacing}{\spaceskip=\fontdimen2\font plus
\BIBentryALTinterwordstretchfactor\fontdimen3\font minus
  \fontdimen4\font\relax}
\providecommand{\BIBforeignlanguage}[2]{{%
\expandafter\ifx\csname l@#1\endcsname\relax
\typeout{** WARNING: IEEEtran.bst: No hyphenation pattern has been}%
\typeout{** loaded for the language `#1'. Using the pattern for}%
\typeout{** the default language instead.}%
\else
\language=\csname l@#1\endcsname
\fi
#2}}
\providecommand{\BIBdecl}{\relax}
\BIBdecl

\bibitem{NOMAIN5G}
Y.~{Liu}, Z.~{Qin}, M.~{Elkashlan}, Z.~{Ding}, A.~{Nallanathan}, and
  L.~{Hanzo}, ``Nonorthogonal multiple access for {5G} and beyond,''
  \emph{Proc. {IEEE}}, vol. 105, no.~12, pp. 2347--2381, Dec. 2017.

\bibitem{higuchi2015non}
K.~Higuchi and A.~Benjebbour, ``Non-orthogonal multiple access ({NOMA}) with
  successive interference cancellation for future radio access,'' \emph{IEICE
  Trans. Commun.}, vol.~98, no.~3, pp. 403--414, 2015.

\bibitem{DingNOMA}
Z.~{Ding}, Y.~{Liu}, J.~{Choi}, Q.~{Sun}, M.~{Elkashlan}, C.~{I}, and H.~V.
  {Poor}, ``Application of non-orthogonal multiple access in {LTE} and {5G}
  networks,'' \emph{IEEE Commun. Mag.}, vol.~55, no.~2, pp. 185--191, Feb.
  2017.

\bibitem{LiaskosNieCM2018}
C.~Liaskos, S.~Nie, A.~Tsioliaridou, A.~Pitsillides, S.~Ioannidis, and
  I.~Akyildiz, ``A new wireless communication paradigm through
  software-controlled metasurfaces,'' \emph{{IEEE} Commun. Mag.}, vol.~56,
  no.~9, pp. 162--169, Sep. 2018.

\bibitem{docomoIRS18}
\url{https://www.rcrwireless.com/20181205/5g/ntt-docomo-
  metawave-test-5g-mobile-system-tokyo.}

\bibitem{CuiIRSLight2014}
T.~J. Cui, M.~Q. Qi, X.~Wan, J.~Zhao, and Q.~Cheng, ``Coding metamaterials,
  digital metamaterials and programmable metamaterials,'' \emph{Light: Science
  \& Applications}, vol.~3, no.~10, p. e218, Oct. 2014.

\bibitem{LiangLISA2019}
Y.-C. {Liang}, R.~{Long}, Q.~{Zhang}, J.~{Chen}, H.~V. {Cheng}, and H.~{Guo},
  ``Large intelligent surface/antennas ({LISA}): Making reflective radios
  smart,'' \emph{J. Commun. Inf. Netw.}, vol.~4, no.~2, pp. 40--50, Jun. 2019.

\bibitem{chen2019intelligent}
J.~{Chen}, Y.-C. {Liang}, Y.~Pei, and H.~{Guo}, ``Intelligent reflecting
  surface: A programmable wireless environment for physical layer security,''
  \emph{{IEEE} Access}, vol.~7, no.~2, pp. 82\,599--82\,612, Jul. 2019.

\bibitem{YangLiangNOMAIRS19}
G.~Yang, X.~Xu, and Y.-C. Liang, ``Intelligent reflecting surface assisted
  non-orthogonal multiple access,'' online available in Arxiv:
  arxiv.org/abs/1907.03133.

\bibitem{DingPoorCL19}
Z.~Ding and H.~V. Poor, ``A simple design of {IRS-NOMA} transmission,'' online
  available in Arxiv: arxiv.org/abs/1907.09918.

\bibitem{FuShi19}
M.~Fu, Y.~Zhou, and Y.~Shi, ``Intelligent reflecting surface for downlink
  non-orthogonal multiple access networks,'' online available in Arxiv:
  arxiv.org/abs/1906.09434.

\bibitem{MuNaofalNOMAIRS19}
X.~Mu, Y.~Liu, L.~Guo, J.~Lin, and N.~Al-Dhahir, ``Exploiting intelligent
  reflecting surfaces in multi-antenna aided {NOMA} systems,'' online available
  in Arxiv: arxiv.org/abs/1910.13636.

\bibitem{NOMADingSignalProcessing}
Z.~{Ding}, Z.~{Yang}, P.~{Fan}, and H.~V. {Poor}, ``On the performance of
  non-orthogonal multiple access in {5G} systems with randomly deployed
  users,'' \emph{{IEEE} Signal Processing Lett.}, vol.~21, no.~12, pp.
  1501--1505, Dec. 2014.

\bibitem{FairnessSignal}
S.~S~Timotheou and I.~Krikidis, ``Fairness for non-orthogonal multiple access
  in {5G} systems,'' \emph{{IEEE} Signal Processing Lett.}, vol.~22, Apr. 2015.

\bibitem{FzhuBeamNOMA}
F.~{Zhu}, Z.~{Lu}, J.~{Zhu}, J.~{Wang}, and Y.~{Huang}, ``Beamforming design
  for downlink non-orthogonal multiple access systems,'' \emph{IEEE Access},
  vol.~6, pp. 10\,956--10\,965, 2018.

\bibitem{BeamMisoNOMA}
H.~M. {Al-Obiedollah}, K.~{Cumanan}, J.~{Thiyagalingam}, A.~G. {Burr},
  Z.~{Ding}, and O.~A. {Dobre}, ``Energy efficient beamforming design for
  {MISO} non-orthogonal multiple access systems,'' \emph{IEEE Trans. Commun.},
  vol.~67, no.~6, pp. 4117--4131, Jun. 2019.

\bibitem{AMimaOptisum}
M.~F. {Hanif}, Z.~{Ding}, T.~{Ratnarajah}, and G.~K. {Karagiannidis}, ``A
  minorization-maximization method for optimizing sum rate in the downlink of
  non-orthogonal multiple access systems,'' \emph{{IEEE} Trans. Signal
  Processing}, vol.~64, no.~1, pp. 76--88, Jan. 2016.

\bibitem{QsunErgodic}
Q.~{Sun}, S.~{Han}, C.~{I}, and Z.~{Pan}, ``On the ergodic capacity of {MIMO}
  {NOMA} systems,'' \emph{IEEE Wireless Commun. Lett.}, vol.~4, no.~4, pp.
  405--408, Aug. 2015.

\bibitem{DingfMIMONOMACluster}
J.~{Ding}, J.~{Cai}, and C.~{Yi}, ``An improved coalition game approach for
  {MIMO-NOMA} clustering integrating beamforming and power allocation,''
  \emph{IEEE Trans. Veh. Technol.}, vol.~68, no.~2, pp. 1672--1687, Feb. 2019.

\bibitem{FangEnergyRe}
F.~{Fang}, H.~{Zhang}, J.~{Cheng}, and V.~C.~M. {Leung}, ``Energy-efficient
  resource allocation for downlink non-orthogonal multiple access network,''
  \emph{IEEE Trans. Commun.}, vol.~64, no.~9, pp. 3722--3732, Sep. 2016.

\bibitem{YsunPowerSubca}
Y.~{Sun}, D.~W.~K. {Ng}, Z.~{Ding}, and R.~{Schober}, ``Optimal joint power and
  subcarrier allocation for full-duplex multicarrier non-orthogonal multiple
  access systems,'' \emph{IEEE Trans. Commun.}, vol.~65, no.~3, pp. 1077--1091,
  Mar. 2017.

\bibitem{PowBeamNOMAMilli}
Z.~{Xiao}, L.~{Zhu}, J.~{Choi}, P.~{Xia}, and X.~{Xia}, ``Joint power
  allocation and beamforming for non-orthogonal multiple access {(NOMA)} in
  {5G} millimeter wave communications,'' \emph{IEEE Trans. Wireless Commun.},
  vol.~17, no.~5, pp. 2961--2974, May. 2018.

\bibitem{ZhangLiangSRAccess19}
Q.~Zhang, L.~Zhang, Y.-C. Liang, and P.~Y. Kam, ``Backscatter-{NOMA}: A
  symbiotic system of cellular and internet-of-things networks,'' \emph{IEEE
  Access}, vol.~7, pp. 20\,000--20\,013, Feb. 2019.

\bibitem{CooperationAmBC}
G.~{Yang}, Q.~{Zhang}, and Y.~{Liang}, ``Cooperative ambient backscatter
  communications for green {Internet-of-Things},'' \emph{{IEEE} Internet of
  Things J.}, vol.~5, no.~2, pp. 1116--1130, Apr. 2018.

\bibitem{LongYangLiangSR18}
R.~{Long}, Y.~{Liang}, H.~{Guo}, G.~{Yang}, and R.~{Zhang}, ``Symbiotic radio:
  A new communication paradigm for passive internet-of-things,'' \emph{IEEE
  Internet of Things J.}, pp. 1--1, 2019.

\bibitem{QwuIRS}
Q.~{Wu} and R.~{Zhang}, ``Intelligent reflecting surface enhanced wireless
  network via joint active and passive beamforming,'' \emph{IEEE Trans.
  Wireless Commun.}, vol.~18, no.~11, pp. 5394--5409, Nov. 2019.

\bibitem{HuangYuenIRSICASSP18}
C.~Huang, A.~Z.~M. Debbah, and C.~Yuen, ``Achievable rate maximization by
  passive intelligent mirrors,'' in \emph{Proc. of {IEEE} Intern. Conf.
  Acoustics, Speech and Signal Process. (ICASSP)}, Calgary, Alberta, Canda,
  Apr. 2018, pp. 1--6.

\bibitem{HuangYuenIRS18}
C.~{Huang}, A.~{Zappone}, G.~C. {Alexandropoulos}, M.~{Debbah}, and C.~{Yuen},
  ``Reconfigurable intelligent surfaces for energy efficiency in wireless
  communication,'' \emph{IEEE Trans. Wireless Commun.}, vol.~18, no.~8, pp.
  4157--4170, Aug. 2019.

\bibitem{ChenLiangIRSAccess2018}
J.~Chen, Y.-C. Liang, Y.~Pei, and H.~Guo, ``Intelligent reflecting surface:a
  programmable wireless environment for physical layer security,'' \emph{IEEE
  Access}, vol.~7, pp. 82\,599--82\,612, Jul. 2019.

\bibitem{JungLISRateOutage}
M.~{Jung}, W.~{Saad}, Y.~{Jang}, G.~{Kong}, and S.~{Choi}, ``Reliability
  analysis of large intelligent surfaces {(LISs)}: Rate distribution and outage
  probability,'' \emph{IEEE Wireless Commun. Lett.}, pp. 1--1, 2019.

\bibitem{LIS_quantization}
Y.~{Han}, W.~{Tang}, S.~{Jin}, C.~{Wen}, and X.~{Ma}, ``Large intelligent
  surface-assisted wireless communication exploiting statistical {CSI},''
  \emph{IEEE Trans. Veh. Technol.}, vol.~68, no.~8, pp. 8238--8242, Aug. 2019.

\bibitem{TahaCSIRS19}
N.~Ahmadi, M.~Nilashi, S.~Samad, T.~A. Rashid, and H.~Ahmadi, ``An intelligent
  method for iris recognition using supervised machine learning techniques,''
  \emph{Optics \& Laser Technol.}, vol. 120, p. 105701, 2019.

\bibitem{bjornson2019intelligent}
{\"O}.~{\"O}zdogan, E.~Bj{\"o}rnson, and E.~G. Larsson, ``Intelligent
  reflecting surfaces: Physics, propagation, and pathloss modeling,'' online
  available in Arxiv: arxiv.org/abs/1911.03359.

\bibitem{AmbientOFDMCarrier}
G.~{Yang}, Y.~{Liang}, R.~{Zhang}, and Y.~{Pei}, ``Modulation in the air:
  Backscatter communication over ambient {OFDM} carrier,'' \emph{{IEEE} Trans.
  Commun.}, vol.~66, no.~3, pp. 1219--1233, Mar. 2018.

\bibitem{ZhangLiangJSAC18}
Q.~Zhang, H.~Guo, Y.-C. Liang, and X.~Yuan, ``Constellation learning based
  signal detection for ambient backscatter communication systems,''
  \emph{{IEEE} J. Sel. Areas Commun.}, vol.~37, no.~2, pp. 452--463, Feb. 2019.

\bibitem{LongLiangSRAccess19}
R.~Long, H.~Guo, L.~Zhang, and Y.-C. Liang, ``Full-duplex backscatter
  communications in symbiotic radio systems,'' \emph{IEEE Access}, vol.~7, pp.
  21\,597--21\,608, Feb. 2019.

\bibitem{liu2017joint}
Z.~Liu, L.~Lei, N.~Zhang, G.~Kang, and S.~Chatzinotas, ``Joint beamforming and
  power optimization with iterative user clustering for {MISO-NOMA} systems,''
  \emph{IEEE Access}, vol.~5, pp. 6872--6884, 2017.

\bibitem{LiuDingFairnessCL2016}
Y.~{Liu}, M.~{Elkashlan}, Z.~{Ding}, and G.~K. {Karagiannidis}, ``Fairness of
  user clustering in {MIMO} non-orthogonal multiple access systems,''
  \emph{IEEE Commun. Lett.}, vol.~20, no.~7, pp. 1465--1468, Jul. 2016.

\bibitem{ma2010semidefinite}
W.-K.~K. Ma, ``Semidefinite relaxation of quadratic optimization problems and
  applications,'' \emph{IEEE Signal Processing Magazine}, vol. 1053, no.
  5888/10, 2010.

\bibitem{CVXBoyd04}
S.~Boyd and L.~Vandenberghe, \emph{Convex Optimization}.\hskip 1em plus 0.5em
  minus 0.4em\relax Cambridge, UK: Cambridge Univ., 2004.

\bibitem{ZhangLiangLiJSAC}
L.~Zhang, J.~Liu, M.~Xiao, G.~Wu, Y.-C. Liang, and S.~Li, ``Performance
  analysis and optimization in downlink {NOMA} systems with cooperative
  full-duplex relaying,'' \emph{{IEEE} J. Sel. Areas Commun.}, vol.~35, no.~10,
  pp. 2398--2412, Oct. 2017.

\bibitem{ULNOMAYuanCL18}
Z.~{Wei}, D.~W.~K. {Ng}, and J.~{Yuan}, ``Joint pilot and payload power control
  for uplink {MIMO-NOMA} with {MRC-SIC} receivers,'' \emph{IEEE Commun. Lett.},
  vol.~22, no.~4, pp. 692--695, Apr. 2018.

\bibitem{OptiNOMAOMA}
Z.~{Chen}, Z.~{Ding}, X.~{Dai}, and R.~{Zhang}, ``An optimization perspective
  of the superiority of {NOMA} compared to conventional {OMA},'' \emph{{IEEE}
  Trans. Signal Process.}, vol.~65, no.~19, pp. 5191--5202, Oct. 2017.

\bibitem{schubert2004solution}
M.~Schubert and H.~Boche, ``Solution of the multiuser downlink beamforming
  problem with individual {SINR} constraints,'' \emph{IEEE Trans. Veh.
  Technol.}, vol.~53, no.~1, pp. 18--28, 2004.

\end{thebibliography}

\end{document}